\title{Characterizing Asynchronous Message-Passing Models Through Rounds}
\author{Adam Shimi}{IRIT -- Université de Toulouse,
                    2 rue Camichel,
                    F-31000 Toulouse, France\\
                    \texttt{http://www.irit.fr}}{adam.shimi@irit.fr}{}{}
\author{Aurélie Hurault}{IRIT -- Université de Toulouse,
                         2 rue Camichel,
                         F-31000 Toulouse, France\\
                         \texttt{http://www.irit.fr}}{aurelie.hurault@irit.fr}{}{}
\author{Philippe Quéinnec}{IRIT -- Université de Toulouse,
                           2 rue Camichel,
                           F-31000 Toulouse, France\\
                           \texttt{http://www.irit.fr}}{philippe.queinnec@irit.fr}{}{}
\authorrunning{A. Shimi, A. Hurault and P. Quéinnec} 
\subjclass{\ccsdesc[500]{Theory of computation~Distributed computing models}}
\keywords{Message-passing, Asynchronous Rounds, Dominant Strategies, Failures}
  \newtheorem{lemmabis}{Lemma}
\begin{document}

\maketitle

\begin{abstract}
    Message-passing models of distributed computing vary along numerous
    dimensions: degree of synchrony, kind of faults, number of faults...
    Unfortunately, the sheer number of models and their subtle distinctions
    hinder our ability to design a general theory of message-passing models.
    One way out of this conundrum restricts communication to proceed by
    round. A great variety of message-passing models can then be captured in
    the Heard-Of model, through predicates on the messages sent in a round
    and received during or before this round. Then, the issue is to find the
    most accurate Heard-Of predicate to capture a given model. This is
    straightforward in synchronous models, because waiting for the upper
    bound on communication delay ensures that all available messages are
    received, while not waiting forever. On the other hand, asynchrony
    allows unbounded message delays. Is there nonetheless a meaningful
    characterization of asynchronous models by a Heard-Of predicate?

    We formalize this characterization by introducing
    Delivered collections: the collections of all messages delivered
    at each round, whether late or not. Predicates on
    Delivered collections capture message-passing models.
    The question is to determine which Heard-Of predicates can be generated
    by a given Delivered predicate. We answer this by formalizing
    strategies for when to change round. Thanks to a partial order on these
    strategies, we also find the "best" strategy for multiple models,
    where "best" intuitively means it waits for as many messages as possible
    while not waiting forever. Finally, a strategy for changing round that
    never blocks a process forever implements a Heard-Of predicate. This
    allows us to translate the order on strategies into an order on Heard-Of
    predicates.
    The characterizing predicate for a model is then the greatest element
    for that order, if it exists.
\end{abstract}

\section{Introduction}
\label{sec:intro}

    \subsection{Motivation}
    \label{subsec:motivation}

        Even when restricted to message-passing,
        distributed computing spawns a plethora of models:
        with various degrees of synchrony,
        with different kinds of faults, with different failure detectors...
        Although some parameters are quantitative, such as the number of faults,
        the majority
        are qualitative instead, for example the kinds of faults.
        Moreover, message-passing models
        are usually defined by a mix of mathematical formalism and textual
        description,
        with crucial details nested deep inside the latter.
        This is why these models resist unification into a theory
        of distributed computing, and why results in the field are
        notoriously hard to organize, use and extend.

        One solution requires constraining communication
        to proceed by round: each process repeatedly broadcasts a message
        with its current round number, waits for as many messages as possible
        bearing this round number, and changes round by computing both
        its next state and next message.
        The variations between models are then captured by the dynamic
        graph specifying, for each round, from which processes each process received
        a message with this round number before the end of its round;
        this fits the concept of dynamic
        network from Kuhn and Oshman~\cite{KuhnDynamic}. Nonetheless,
        we will privilege the perspective of Charron-Bost and Schiper
        Heard-Of model~\cite{CharronBostHO}, which places itself more at the
        level of processes. Here, the Heard-Of collection of an execution
        contains, for each round $r$ and each process $j$,
        the set of processes from which $j$ received a message sent in round
        $r$ before going to round $r+1$. Then, a predicate on Heard-Of collections
        characterizes a message-passing model.

        Yet rounds don't remove the complexities and subtleties of
        message-passing models -- they just shift them to the characterization
        of a given model by a Heard-Of predicate. This characterization depends
        on how rounds can be implemented in the underlying model.
        In the synchronous case, processes progress in lock-step, and every message
        that will ever be received is received during its corresponding round.
        Hence, the Heard-Of predicate characterizing a synchronous model
        simply specifies which messages can be lost.
        In asynchronous models on the other hand, messages can be late, and thus
        the distance between the round numbers of processes is unbounded.
        The combination of these uncertainties implies that processes do not
        know which messages will be delivered and when. Thus, there is a risk
        of not waiting for useful messages that will eventually arrive,
        and to wait forever for messages that never will.

        To the best of our knowledge, there is no systematic study of
        the Heard-Of predicates generated by various asynchronous
        message-passing models. Because it is a crucial step in unifying
        distributed computing's menagerie of models through rounds,
        we also believe this topic to be of importance.

    \subsection{Approach and Overview}
    \label{subsec:approachOverview}

        As hinted above, the difficulty lies in the potential discrepancy
        between messages delivered on time -- captured by Heard-Of collections --
        and messages delivered at all. We want to determine the former, but
        it is considerably easier to specify the latter for an operational
        model: list the messages that will eventually arrive.
        We therefore center our formalization around Delivered
        collections, infinite sequences of communication graphs capturing,
        for each round, all messages sent at this round and eventually delivered
        for a given operational model.
        The question is thus to characterize by a Heard-Of predicate
        which messages can be waited for when the deliveries are those from
        the Delivered predicate.

        From these Delivered collections, we build runs representing
        the different scheduling of deliveries and changes of round. Some
        of these runs, called valid, define a Heard-Of collection;
        invalid runs have processes blocked forever at some round. We
        filter the latter thanks to strategies: sets of local states
        for which a process is allowed to change round. Runs for a strategy
        must also satisfy a fairness condition ensuring that if a process
        can change round continuously, it does. Strategies with
        only valid runs for a Delivered predicate,
        that is strategy implementing a Heard-Of predicate, are called
        valid.

        The next question is how to choose a valid strategy and the
        corresponding predicate, as characterizing a Delivered predicate and
        its underlying model? We answer by taking the strategy generating
        the Heard-Of predicate that is the smallest overapproximation of the
        Delivered predicate.
        From this
        intuition, we define a partial order on valid strategies called
        domination; a characterizing strategy is a greatest element for this
        order, and the characterizing predicate is the one it generates.

        The results obtained with this approach are threefold:
        \begin{itemize}
            \item The formalization itself, with a complete example: the
                asynchronous message-passing model with reliable communication
                and at most $F$ permanent crashes.
            \item The study of carefree strategies, the ones depending only
                on messages from the current round. This restricted class
                is both well-behaved enough to always have a unique dominating
                strategy, and expressive enough to capture interesting
                Delivered predicates.
            \item The study of reactionary strategies, the ones depending only
                on messages from past and current rounds.
                Here too we show well-behavior of this class as well as
                an example where reactionary is needed for domination,
                and another one where it is insufficient.
        \end{itemize}
        Along these results, we also formally prove the characterization of
        asynchronous models by Heard-Of predicates given by Charron-Bost and
        Schiper~\cite{CharronBostHO}.

        We begin by the formalization in \textbf{Section~\ref{sec:formal}},
        while \textbf{Section~\ref{sec:formalexample}} introduces
        a fully developed example: the asynchronous
        message-passing model with reliable communication and
        at most $F$ permanent crashes.
        \textbf{Section~\ref{sec:memless}}
        explores carefree strategies in terms of well-behavior
        and expressivity, closing with the example with at most $B$ failed
        broadcasts per round, a Delivered predicate dominated by
        a carefree strategy.
        We follow by studying reactionary strategies in
        \textbf{Section~\ref{sec:general}}.
        Here again, well-behavior and expressivity are examined, followed
        by the example of at most $F$ permanent initial
        crashes. \textbf{Section~\ref{sec:relatedworks}}
        and \textbf{Section~\ref{sec:conclusion}} then conclude the paper
        with a discussion of related works, the value of our results
        and some perspectives.

\section{Formalization}
\label{sec:formal}

    All our abstractions revolve around infinite sequences of
    graphs, called collections. A Delivered collection maps each round $r$
    and process $j$ to the set of processes from which $j$
    receive a message sent at $r$. A Heard-Of collection maps each round $r$ and process $j$
    to the set of processes from which $j$ received, before going to round
    $r+1$, the message sent at $r$.
    The difference lies in considering all deliveries for Delivered collections,
    but only the deliveries before the end of the round of the receiver
    for Heard-Of collections.

    \begin{definition}[Collections and Predicates]
        Let $\Pi$ a set of processes.
        $Col: (\mathbb{N}^* \times \Pi) \mapsto \mathcal{P}(\Pi)$ is
        either a \textbf{Delivered collection} or a \textbf{Heard-Of collection}
        for $\Pi$, depending on the context.

        In the same way,
        $Pred: \mathcal{P}((\mathbb{N}^* \times \Pi) \mapsto \mathcal{P}(\Pi))$
        is either a \textbf{Delivered predicate} or a \textbf{Heard-Of predicate}
        for $\Pi$.

        For a given \textit{Col}, the kernel of round $r$
        are the processes from which everyone receives a message for this round
        $K_{\textit{Col}}(r) \triangleq \bigcap\limits_{j \in \Pi} \textit{Col}(r,j)$.
    \end{definition}

    \subsection{Runs and Strategies}

    The behavior of processes is classically specified by runs,
    sequences of both states and transitions satisfying some restricting
    conditions: messages cannot be delivered before the round they are sent
    and are delivered only once. Given a Delivered collection, we
    additionally require that the delivered messages are exactly
    the ones in the Delivered collection.

    As we only care about which messages can
    be waited for, ignoring the content of messages or the underlying
    computation, we limit the state to the received messages and the round.

    \begin{definition}[Run]
        Let $\Pi$ be a set of $n$ processes.
        Let $Q = (\mathbb{N} \times \mathcal{P}(\mathbb{N}^* \times \Pi))$
        the set of process states.
        The first element is the round of a process (written $q.round$
        for $q \in Q$) and the second
        is the set of pairs $\langle$round it was sent, sender$\rangle$
        for each delivered message (written \textit{q.received} for $q \in Q$).
        Let the set of transitions $T = \{ \textit{next}_j \mid j \in \Pi \} \cup
        \{ \textit{deliver}(r,k,j) \mid r \in \mathbb{N}^* \land k,j \in \Pi\}
        \cup \{ end \}$. \textit{next}$_j$ is the transition for $j$ changing round,
        \textit{deliver}$(r,k,j)$ is the transition for the delivery
        to $j$ of the message sent by $k$ in round $r$, and \textit{end} is
        the transition to end a finite run.
        For $qt \in Q^n \times T$ and $j \in \Pi$,
        we write $qt.state$ for the state, $qt.state.j$ for the local state of $j$
        and $qt.transition$ for the transition.
        Finally, let $(Q^n \times T)^{\infty}$
        be the set of finite and infinite words on the set $(Q^n \times T)$.
        Then, $t \in (Q^n \times T)^{\infty}$ is a \textbf{run}
        $\triangleq$
        \begin{itemize}
            \item \textbf{(Initial state)}
                $t[0].state = \langle 1, \emptyset \rangle^n$
            \item \textbf{(Transitions)}
              $\forall i \in [0,\textit{size}(t)):\\
              \hspace*{-1.5em}
              \left(
              \begin{array}{l}
                  \quad \exists r \in \mathbb{N}^*,\exists k,j \in \Pi: t[i].transition = deliver(r,k,j) \implies\\
                  \quad\quad t[i+1].state = t[i].state \\
                    \quad\quad\quad\quad\quad\quad \textit{ Except }
                        t[i+1].state.j.received =
                        t[i].state.j.received \cup \{(r,k)\}\\
                  \land~ \exists j \in \Pi: t[i].transition = next_j \implies \\
                    \quad\quad
                        t[i+1].state = t[i].state \textit{ Except }
                        t[i+1].state.j.round =
                        t[i].state.j.round+1\\
                   \land~ (t[i].transition = end) \implies (i = size(t) - 1)\\
              \end{array}
              \right)$
            \item \textbf{(Delivery after sending)}
                $\forall i \in [0,\textit{size}(t)):\\
                t[i].transition = deliver(r,k,j) \implies
                    t[i].state.k.round \geq r$
            \item \textbf{(Unique delivery)}
                $\forall \langle r, k, j \rangle \in
                (\mathbb{N}^* \times \Pi \times \Pi):
                (\exists i \in [0,size(t)): t[i].transition = deliver(r,k,j))
                \implies
                (\forall i' \in [0,size(t)) \setminus \{i\}:
                t[i'].transition \neq deliver(r,k,j))$
        \end{itemize}

        Let \textit{CDel} be a Delivered collection.
        Then, $runs(\textit{CDel})$, the \textbf{runs of} \textit{CDel} $\triangleq$\\
        $\left\{
            t \textit{ a run} ~\middle|~
            \begin{array}{l}
                \forall \langle r, k, j \rangle
                \in \mathbb{N}^* \times \Pi \times \Pi:\\
                \quad(
                    k \in \textit{CDel}(r,j)
                    \land \exists i \in [0,\textit{size}(t)):
                        t[i].state.j.round \geq r
                )\\
                \quad\iff\\
                \quad(
                    \exists i \in [0,\textit{size}(t)): t[i].transition = deliver(r,k,j)
                )
            \end{array}
            \right\}$

        For \textit{PDel} a Delivered predicate, we write
        $runs(PDel) = \{runs(\textit{CDel}) \mid \textit{CDel} \in PDel\}$.
    \end{definition}

    Our definition of runs does not force
    processes to change rounds. This contradicts our
    intuition about a system using rounds: processes should keep on "forever",
    or at least as long as necessary. In a valid run, processes change round an
    infinite number of times.

    \begin{definition}[Validity]
        A run $t$ is \textbf{valid}
        \mbox{$\triangleq
                \forall j \in \Pi:
                |\{i \in \mathbb{N} \mid t[i].transition = next_j\}| = \aleph_0$}.
    \end{definition}

    Valid run are necessarily infinite. Yet the definition above allows
    finite runs thanks to the \textit{end} transition. This is used in
    proofs by contradiction which imply the manipulation of invalid runs and
    thus potentially finite ones.

    Next, we define the other building block of our approach: strategies.
    They are simply sets of local states, representing the states where
    processes can change round.

    \begin{definition}[Strategy]
        $f : \mathcal{P}(Q)$ is a \textbf{strategy}.
    \end{definition}

    Combining a Delivered predicate and a strategy results in runs capturing
    the behavior of processes for the corresponding Delivered collections
    when following the strategy. In these runs, processes can change round only when
    allowed by the strategy, and must also do so if the strategy allows
    it continuously.

    \begin{definition}[Runs Generated by a Strategy]
      \label{def:runstrategy}
        Let $f$ be a strategy and $t$ a run.
        $t$ is a \textbf{run generated by} $f \triangleq$
        $t$ satisfies the following:
        \begin{itemize}
            \item \textbf{(Next only if allowed)}
                $\forall i \in [0,size(t)), \forall j \in \Pi:
                (t[i].transition = next_j  \implies t[i].state.j \in f)$
            \item \textbf{(Infinite fairness of next)}
                If $t$ is infinite, then $\forall j \in \Pi:
                |\{i \in \mathbb{N} \mid t[i].transition = next_j\}| < \aleph_0
                \implies |\{i \in \mathbb{N} \mid t[i].state.j \notin f\}|
                = \aleph_0$
            \item \textbf{(Finite fairness of next)}
                If $t$ is finite, then $\forall j \in \Pi:
                t[size(t)-1].state.j \notin f$.
        \end{itemize}

        For a Delivered predicate \textit{PDel}, we note $runs_f(\textit{PDel})
        = \{\textit{t a run} \mid \textit{t generated by f} \land
        t \in runs(\textit{PDel}) \}$.
    \end{definition}

    From that point, it is clear that a well-behaved strategy is such
    that all its runs are valid.

    \begin{definition}[Valid Strategy]
        Let \textit{PDel} a Delivered predicate and $f$ a strategy.
        $f$ is a \textbf{valid strategy} for \textit{PDel}
        $\triangleq \forall t \in runs_f(PDel): t$ is a valid run.
    \end{definition}

    Validity guarantees an infinite number of complete
    rounds for every run of the strategy. This ensures that a run defines a Heard-Of
    collection, as we see next.

    \subsection{From Delivered Collections to Heard-Of Collections}

    Recall that the difference between a Heard-Of and a Delivered collection is that
    the latter takes into account all delivered messages, while the former
    only considers messages from a round if they were received
    before or during the corresponding round of the receiver.

    If a run is valid, then all processes have infinitely many rounds, and thus it
    defines a Heard-Of collection through its behavior.

    \begin{definition}[Heard-Of Collection of a Valid Run]
        Let $t$ a valid run.
        Then, $CHO_t$ is the \textbf{Heard-Of collection of} $t \triangleq\\
        \forall r \in \mathbb{N}^*, \forall j \in \Pi : CHO_t(r,j) =
        \left\{ k \in \Pi ~\middle|~ \exists i \in \mathbb{N}:
        \left(
        \begin{array}{ll}
            & t[i].state.j.round = r\\
            \land & t[i+1].j.state.round = r+1\\
            \land & \langle r,k \rangle \in t[i].state.j.received\\
        \end{array}
        \right)
        \right\}$
    \end{definition}

    It is useful to go the other way, and extract from a Heard-Of
    collection some canonical valid run generating it. Our choice is
    a run where processes change round in lockstep, every message
    from a Heard-Of set is delivered in the round where it was sent, and
    every late message is delivered in the round
    following the one where it was sent.

    \begin{definition}[Standard Run of a Heard-Of collection]
        Let \textit{cho} be a Heard-Of collection.
        For $r > 0$, let $\textit{onTimeMes}_r$ be a permutation of
         $\{deliver(r,k,j) \mid
        k, j \in \Pi \land k \in cho(r,j)\}$, $\textit{lateMes}_r$
        a permutation of
        $\{deliver(r-1,k,j) \mid k, j \in \Pi \land k \notin cho(r-1,j)\}$,
        and
        $\textit{nexts}$ be a permutation of $\{next_j \mid j \in \Pi\}$.
        Then, the run starting at the initial state and with
        transitions defined by the word
        $\prod\limits_{r > 0} (\textit{lateMes}_r . \textit{onTimeMes}_r .
        \textit{nexts})$ is a \textbf{standard run} of \textit{cho}.
    \end{definition}

    This canonical run is a run of any Delivered predicate containing the
    collection where every message is delivered. This collection captures
    the case where no failure occurs: every process always broadcasts and no
    message is lost. Having this collection in a Delivered predicate
    ensures that although faults might happen, they are not forced to do so.

    \begin{lemma}[Standard Run is a Run for Total Collection]
        \label{trivial}
        Let \textit{cho} be a Heard-Of collection and let
        \textit{PDel} be a Delivered
        predicate containing the total collection \textit{CDel}$_{total}$
        defined by $\forall r > 0, \forall j \in \Pi:
        \textit{CDel}_{\textit{total}}(r,j) = \Pi$. Then,
        a standard run $t$ of \textit{cho} is a run of \textit{PDel}.
    \end{lemma}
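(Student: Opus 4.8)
The plan is to prove the stronger fact that $t \in runs(\textit{CDel}_{total})$; since $\textit{CDel}_{total} \in \textit{PDel}$ by hypothesis, this immediately witnesses that $t$ is a run of \textit{PDel}. So I would fix the Delivered collection to be the total one and check the two things required: that $t$ is a run, and that it satisfies the Delivered condition of $\textit{CDel}_{total}$.

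First I would record the lockstep structure induced by the transition word $\prod_{r>0}(\textit{lateMes}_r.\textit{onTimeMes}_r.\textit{nexts})$. By an easy induction on $r$, every process sits at round $r$ throughout the $\textit{lateMes}_r.\textit{onTimeMes}_r$ block of phase $r$ and is advanced to round $r+1$ only by the concluding $\textit{nexts}$ block of that phase. I will use two consequences repeatedly: (i) every process reaches every round, so $\exists i: t[i].state.j.round \geq r$ holds for all $r$ and all $j$; and (ii) any delivery of a message sent at round $r'$ that occurs during phase $r$ happens while the sender is still at round $r$, because the deliveries precede the $\textit{nexts}$ within the phase.

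Next I would verify the four run conditions. Initial state holds by the definition of a standard run, and the Transitions clause holds because $t$ is by construction the run whose states are generated from the initial state by applying exactly these transitions. For Delivery after sending, an on-time delivery $deliver(r,k,j)$ lives in phase $r$ where the sender $k$ is at round $r \geq r$, and a late delivery $deliver(r-1,k,j)$ lives in phase $r$ where $k$ is at round $r \geq r-1$; both cases invoke consequence (ii). For Unique delivery, I would note that for a fixed $r$ the on-time set $\{deliver(r,k,j) \mid k \in cho(r,j)\}$ and the late set $\{deliver(r,k,j) \mid k \notin cho(r,j)\}$ partition $\{deliver(r,k,j) \mid k,j \in \Pi\}$, the former being emitted in phase $r$ and the latter in phase $r+1$, so each triple $\langle r,k,j\rangle$ is delivered exactly once.

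Finally I would check the Delivered condition for $\textit{CDel}_{total}$. For every $\langle r,k,j\rangle$ the left-hand side $k \in \textit{CDel}_{total}(r,j) \land \exists i: t[i].state.j.round \geq r$ is always true, since $\textit{CDel}_{total}(r,j) = \Pi$ and by consequence (i); the right-hand side is also always true, since the partition argument shows every $deliver(r,k,j)$ occurs in $t$. Hence the biconditional holds trivially for each triple. I expect the only delicate point to be Delivery after sending: the argument hinges on the fact that, inside each phase, the $\textit{lateMes}$ and $\textit{onTimeMes}$ blocks are scheduled before the $\textit{nexts}$ block, so that the sender has not yet advanced past the round in which it was supposed to emit the delivered message.
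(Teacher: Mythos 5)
Your proof is correct and takes essentially the same route as the paper's: verify the four run conditions using the lockstep structure of the standard run, then observe that every message is eventually delivered, so $t \in runs(\textit{CDel}_{total})$ and hence $t$ is a run of \textit{PDel}. Your write-up is somewhat more explicit than the paper's (the lockstep induction, the partition argument for unique delivery, and checking both directions of the biconditional in the definition of $runs(\textit{CDel}_{total})$), but the underlying argument is identical.
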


    \begin{proof}
        First, $t$ is a run since it satisfies the four constraints
        defining a run:
        \begin{itemize}
            \item The first state is the initial state by definition.
            \item We only gave the transitions, and the states
                changes as mandated by the transition function.
            \item Every message from round $r$ is delivered after
                either $r$ or $r+1$ \textit{next} transitions for
                the sender, which ensures it is delivered after
                being sent.
            \item Every sent message is delivered either during the
                round it was sent or during the next one, and thus
                delivered only once.
        \end{itemize}

        Furthermore, one message from each process is eventually delivered to
        everyone for each round in $t$,
        which means that $t$ is a run of the total Delivered
        collection, and thus a run of \textit{PDel}.
    \end{proof}

    \begin{definition}[Heard-Of Predicate Generated by Strategy]
        If $f$ is a valid strategy for \textit{PDel},
        we write $PHO_f(\textit{PDel})$ for the Heard-Of collections of the
        runs generated by $f$ for \textit{PDel}: $PHO_f(\textit{PDel})
        \triangleq \{ CHO_t \mid t \in runs_f(\textit{PDel}) \}$.
    \end{definition}

    Every valid strategy generates a Heard-Of predicate from the
    Delivered predicate. We now have a way to go from a Delivered predicate
    to a Heard-Of one: design a valid strategy for the former that generates
    the latter. But we still have not answered the original question: among
    all the Heard-Of predicates one can generate
    from a given Delivered predicate, which one should we consider as the
    characterization of the Delivered predicate?

    First, remark that all Delivered collections
    can be generated as Heard-Of collections by a valid strategy:
    simply deliver all messages from a round before changing the round
    of a process -- the change of round must eventually happen by validity
    of the strategy. Thus, every Heard-Of predicate generated from a Delivered
    one is an overapproximation of the latter: for any strategy $f$, $PDel \subseteq PHO_f(PDel)$. But we want to receive as
    many messages as possible on time, that is to be as close as possible
    to the original Delivered predicate. The characterizing Heard-Of predicate
    is thus the smallest such overapproximation, if it exists.

    We formalize this intuition by defining a partial order on valid strategies
    for a Delivered predicate capturing the implication of
    the generated Heard-Of predicates. One strategy dominates another if the
    Heard-Of set it generates is included in the one generated by the other.
    Dominating strategies are then the greatest elements for this order. By
    definition of domination, all dominating strategies generate the same
    dominating Heard-Of predicate, which characterizes the Delivered predicate.

    \begin{definition}[Domination Order, Dominating Strategy and Dominating Predicate]
        Let \textit{PDel} be a Delivered predicate and let
        $f$ and $f'$ be two valid strategies for \textit{PDel}.
        Then, $f$~\textbf{dominates} $f'$ for \textit{PDel},
        written $f' \prec_{\textit{PDel}} f \triangleq
        PHO_{f'}(\textit{PDel}) \supseteq PHO_f(\textit{PDel})$.

        A greatest element for $\prec_{\textit{PDel}}$ is called
        a \textbf{dominating strategy} for \textit{PDel}. Given
        such a strategy $f$, the \textbf{dominating predicate}
        for \textit{PDel} is then $PHO_f(\textit{PDel})$.
    \end{definition}

\section{A Complete Example: At Most $F$ Crashes}
\label{sec:formalexample}

    To provide a more concrete intuition, we turn to an example:
    the message-passing model with
    asynchronous and reliable communication, and at most $F$ permanent
    crashes. To find the corresponding Delivered predicate, we characterize
    which messages are delivered in a round execution of this model:
    all the messages sent by a process before it crashes are delivered; it sends
    no message after; at the round where it crashes, there may
    be an incomplete broadcast if the process crashes in the middle of it.
    Additionally, at most $F$ processes can crash.

    \begin{definition}[\textit{PDel}$^{F}$]
        The Delivered predicate \textit{PDel}$^{F}$ for the asynchronous model
        with reliable communication and at most $F$ permanent crashes
        $\triangleq\\
        \left\{ \textit{CDel} \in (\mathbb{N}^* \times \Pi) \mapsto \mathcal{P}(\Pi)
        ~\middle|~
            \forall r > 0, \forall j \in \Pi:
            \begin{array}{ll}
                & |CDel(r,j)| \geq n-F\\
                \land & CDel(r+1,j) \subseteq K_{\textit{CDel}}(r)\\
            \end{array}
        \right\}$.
    \end{definition}

    The folklore strategy for this model is to wait for at least $n-F$ messages
    before allowing the change of round.

    \begin{definition}[waiting for $n-F$ messages]
      The strategy to wait for $n-F$ messages is:
        $f_{n-F} \triangleq
        \{ q \in Q \mid
        |\{k \in \Pi \mid \langle q.round, k \rangle \in q.received\}| \geq n-F \}$
    \end{definition}

    To see why this strategy is used in the literature, simply remark that
    at least $n-F$ messages must be delivered to each process at each round.
    Thus, waiting for that many messages ensures that no process is ever
    blocked. Rephrased with the concepts introduced above,
    $f_{n-F}$ is a valid strategy for \textit{PDel}$^F$.

    \begin{lemma}[Validity of $f_{n-F}$]
        \label{nfValid}
        $f_{n-F}$ is valid for \textit{PDel}$^{F}$.
    \end{lemma}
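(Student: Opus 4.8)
The plan is to show that every run in $runs_{f_{n-F}}(\textit{PDel}^F)$ is valid, i.e., that no process can remain blocked at a round forever. The engine of the argument is a single observation linking the lower bound in $\textit{PDel}^F$ to the threshold defining $f_{n-F}$: for any $CDel \in \textit{PDel}^F$, any run $t \in runs(CDel)$, and any round $r$ that a process $j$ reaches in $t$, the run-membership condition forces every message $\langle r,k\rangle$ with $k \in CDel(r,j)$ to be delivered to $j$ at some finite index, since the left side of the defining equivalence holds once $j$ has reached round $r$. As $|CDel(r,j)| \geq n-F$, once all these deliveries have occurred the local state of $j$ satisfies $|\{k \in \Pi \mid \langle r,k\rangle \in \textit{received}\}| \geq n-F$ and hence lies in $f_{n-F}$; and because \textit{received} only grows along a run, $j$ stays in $f_{n-F}$ forever after.

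Fix $CDel \in \textit{PDel}^F$ and $t \in runs_{f_{n-F}}(CDel)$, and suppose toward a contradiction that $t$ is not valid. I split on whether $t$ is finite. If $t$ is finite, each process $j$ sits at a final round $r_j = t[\textit{size}(t)-1].state.j.round$ in the last state, so it has in particular reached round $r_j$. The observation above, applied with $r = r_j$, shows that $j$'s last local state lies in $f_{n-F}$. This contradicts the \textbf{(Finite fairness of next)} clause, which requires the last local state of \emph{every} process to lie outside $f_{n-F}$. Hence no finite run is generated.

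If $t$ is infinite but not valid, some process $j$ performs only finitely many $next_j$ transitions and therefore stabilizes at a final round $r$. Applying the observation to this $r$, there is an index beyond which $j$'s local state is permanently in $f_{n-F}$, so $|\{i \in \mathbb{N} \mid t[i].state.j \notin f_{n-F}\}| < \aleph_0$. But \textbf{(Infinite fairness of next)} demands that a process changing round only finitely often lie outside $f_{n-F}$ infinitely often, a contradiction. Thus $t$ is valid, and $f_{n-F}$ is valid for $\textit{PDel}^F$.

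The only delicate point is which direction of the equivalence in the definition of $runs(CDel)$ carries the argument: deliveries of the messages in $CDel(r,j)$ are \emph{guaranteed}, not merely permitted, as soon as the receiver has reached the sending round, and this is exactly what eventually enables the $n-F$ threshold. Everything else is bookkeeping; in particular the proof uses only the cardinality bound $|CDel(r,j)| \geq n-F$ and never the second constraint $CDel(r+1,j) \subseteq K_{CDel}(r)$ of $\textit{PDel}^F$.
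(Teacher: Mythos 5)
Your proof is correct and takes essentially the same route as the paper's: both argue by contradiction that a process stuck at its final round $r$ must, by the run-membership equivalence for $runs(\textit{CDel})$ and the bound $|\textit{CDel}(r,j)| \geq n-F$, eventually have its local state permanently inside $f_{n-F}$, contradicting the fairness clauses of runs generated by a strategy. If anything you are more careful than the paper, which asserts ``$t$ is infinite'' without justification where you explicitly dispose of finite generated runs via the finite-fairness clause, and which invokes the minimality of $r$ where (as your argument shows) only the fact that $j$ reaches round $r$ is needed.
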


    \begin{proof}
        We proceed by contradiction: \textbf{Assume} $f_{n-F}$ is invalid
        for \textit{PDel}$^F$. Thus, there exists
        $t \in runs_{f_{n-F}}(\textit{PDel}^F)$ invalid.
        Because $t$ is infinite, the problem
        is either the infinite fairness of \textit{next} or a \textit{next} done
        when $f_{n-F}$ does not allow it. Each \textit{next} transition being played
        an infinite number of times, we conclude that some \textit{next} is
        played while $f_{n-F}$ does not allow it.
        Let $r$ be the smallest round where it happens
        and $j$ be a process blocked at $r$ in $t$.
        Let also \textit{CDel}$_t$ be a Delivered collection of \textit{PDel}$^F$
        such that $t \in runs(\textit{CDel}_t)$.

        We know by definition of \textit{PDel}$^F$ that
        $|\textit{CDel}_t(r,j)| \geq n-F$. The minimality of $r$ and
        the fact that $t \in runs(\textit{CDel})$ then ensure
        that all messages in this Delivered set are delivered at some point
        in $t$. By definition of $f_{n-F}$, the
        transition \textit{next}$_j$ is then available from this
        point on. This \textbf{contradicts} the fact that $j$ cannot
        change round at this point in $t$.
    \end{proof}

    The Heard-Of predicate generated by $f_{n-F}$ was first given
    by Charron-Bost and Schiper~\cite{CharronBostHO} as a characterization
    the asynchronous model with reliable communication and
    at most $F$ crashes. The intuition behind it is that even in the absence of
    crashes, we can make all processes change round by delivering any set
    of at least $n-F$ messages to them.

    \begin{theorem}[Heard-Of Characterization of $f_{n-F}$]
        \label{n-fCharac}~\\
        $PHO_{f_{n-F}}(\textit{PDel}^F) =
        \{ \textit{cho} \in (\mathbb{N}^* \times \Pi) \mapsto \mathcal{P}(\Pi)
        \mid \forall r \in \mathbb{N}^*, \forall j \in \Pi:
        |\textit{cho}(r,j)| \geq n-F \}$.
    \end{theorem}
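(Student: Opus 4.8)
The plan is to prove the set equality by double inclusion. For the inclusion $PHO_{f_{n-F}}(\textit{PDel}^F) \subseteq \{\textit{cho} \mid \forall r,j: |\textit{cho}(r,j)| \geq n-F\}$, I would start from an arbitrary run $t \in runs_{f_{n-F}}(\textit{PDel}^F)$. By Lemma~\ref{nfValid}, $f_{n-F}$ is valid, so $t$ is a valid run and $CHO_t$ is well defined. Fixing $r$ and $j$, consider the index $i$ at which $j$ performs the $next_j$ taking it from round $r$ to round $r+1$ (this is exactly the index appearing in the definition of $CHO_t(r,j)$). The ``Next only if allowed'' clause of Definition~\ref{def:runstrategy} forces $t[i].state.j \in f_{n-F}$, which by definition of $f_{n-F}$ means that $j$ has at that point received at least $n-F$ messages bearing round number $r$. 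Since $t[i].state.j.round = r$, these are precisely the pairs $\langle r,k\rangle \in t[i].state.j.received$, i.e. the elements of $CHO_t(r,j)$, so $|CHO_t(r,j)| \geq n-F$.

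For the reverse inclusion I would exhibit, for each Heard-Of collection $\textit{cho}$ with $|\textit{cho}(r,j)| \geq n-F$ everywhere, a run $t \in runs_{f_{n-F}}(\textit{PDel}^F)$ with $CHO_t = \textit{cho}$, and the natural candidate is a standard run $t$ of $\textit{cho}$. First I would check $\textit{CDel}_{\textit{total}} \in \textit{PDel}^F$: indeed $|\textit{CDel}_{\textit{total}}(r,j)| = n \geq n-F$ and $\textit{CDel}_{\textit{total}}(r+1,j) = \Pi \subseteq K_{\textit{CDel}_{\textit{total}}}(r) = \Pi$. Hence Lemma~\ref{trivial} applies and $t$ is a run of $\textit{PDel}^F$.

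The remaining work is to verify that $t$ is generated by $f_{n-F}$ and that $CHO_t = \textit{cho}$, and this is where I expect the only real care to be needed. The point is to track exactly which messages have been delivered at each $next$. In a standard run the late messages of round $r$ are delivered in the block for round $r+1$, strictly after the $nexts$ of round $r$; consequently, at the moment $j$ changes from round $r$ to $r+1$ it has received precisely the on-time messages, namely the set $\{k \mid k \in \textit{cho}(r,j)\}$. This yields two conclusions at once: the received round-$r$ messages number $|\textit{cho}(r,j)| \geq n-F$, so $t[i].state.j \in f_{n-F}$ and ``Next only if allowed'' holds; and these same messages are exactly what the definition of $CHO_t(r,j)$ collects, so $CHO_t(r,j) = \textit{cho}(r,j)$. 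The two fairness clauses are immediate: each $next_j$ occurs once per round-block, hence infinitely often, so the infinite-fairness premise is vacuously satisfied and the finite-fairness clause does not apply to the infinite run $t$. Thus $t \in runs_{f_{n-F}}(\textit{PDel}^F)$ with $CHO_t = \textit{cho}$, giving $\textit{cho} \in PHO_{f_{n-F}}(\textit{PDel}^F)$.

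The main obstacle, as indicated, is the $\supseteq$ direction: although the standard run delivers \emph{every} message -- which is precisely what makes it a run of the total collection, and thus of $\textit{PDel}^F$ -- one must argue that it nonetheless changes round on the strength of the on-time messages alone, so that the generated Heard-Of collection recovers $\textit{cho}$ exactly rather than some strict overapproximation. Getting the bookkeeping of late versus on-time deliveries right at each $next$ is the crux; the rest is routine.
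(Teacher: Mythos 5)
Your proposal is correct and follows essentially the same route as the paper: the $\subseteq$ direction from the ``Next only if allowed'' clause of Definition~\ref{def:runstrategy}, and the $\supseteq$ direction by taking a standard run of \textit{cho}, invoking Lemma~\ref{trivial}, and observing that at each $next_j$ the process holds all of $\textit{cho}(r,j)$, hence at least $n-F$ round-$r$ messages. The only differences are cosmetic and in your favor: you argue directly that the standard run is generated by $f_{n-F}$ where the paper phrases the same fact as a proof by contradiction, and you explicitly verify the two points the paper leaves implicit, namely that $\textit{CDel}_{\textit{total}} \in \textit{PDel}^F$ and that $CHO_t = \textit{cho}$ (not merely some overapproximation of it).
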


    \begin{proof}
        First, we show $\subseteq$.
        Let $\textit{cho} \in PHO_{f_{n-F}}(\textit{PDel}^F)$
        and $t \in runs_{f_{n-F}}(\textit{PDel}^F)$ a run of $f_{n-F}$
        generating \textit{cho}.
        By definition of the runs of $f_{n-F}$, processes change
        round only when they received at least $n-F$ messages from the current
        round, which implies that
        $\forall r \in \mathbb{N}^*, \forall j \in \Pi : |cho(r,j)| \geq n-F$.

        Then, we show $\supseteq$.
        Let \textit{cho} a Heard-Of collection over $\Pi$ such that
        $\forall r \in \mathbb{N}, \forall j \in \Pi: |\textit{cho}(r,j)| \geq n-F$.
        Let $t$ be a standard run of \textit{cho}; since \textit{PDel}$^F$
        contains the total collection, $t$ is a run of \textit{PDel}$^F$
        by Lemma~\ref{trivial}.
        To prove this is also a run of $f_{n-F}$, we proceed by contradiction:
        \textbf{Assume} it is not: because $t$ is infinite, the problem
        is either the infinite fairness of \textit{next} or a \textit{next} done
        when $f_{n-F}$ does not allow it. Each  \textit{next} transition being played
        an infinite number of times in a standard run,
        the only possibility left is the second one: some \textit{next}
        transition in $t$ is done while
        $f_{n-F}$ does not allow the corresponding process to change round.
        Let $r$ be the smallest round where this happens, and
        $j$ one of the concerned processes at round $r$.
        By definition of $t$ as a standard run,
        $j$ received all messages from \textit{cho}$(r,j)$
        before the problematic \textit{next}. And $|\textit{cho}(r,j)| \geq n-F$
        by hypothesis. By definition of $f_{n-F}$, the
        transition \textit{next}$_j$ is then available from this
        point on. This \textbf{contradicts} the fact that $j$ cannot
        change round at this point.
        We conclude that $\textit{cho} \in PHO_{f_{n-F}}(\textit{PDel}^F)$.
    \end{proof}

    Finally, we want to vindicate the folklore intuition about this strategy:
    that it is optimal in some sense. Intuitively, waiting for more than $n-F$
    messages per round means risking waiting forever, and waiting for less
    is wasteful. Our domination order captures this concept of optimality:
    we show that $f_{n-F}$ is indeed a dominating strategy for \textit{PDel}$^F$.
    Therefore, \textit{PHO}$_{f_{n-F}}(\textit{PDel}^F)$ is the
    dominating predicate for \textit{PDel}$^F$.

    To do so, we introduce another canonical run, this time for the
    combination of a Delivered collection and a strategy.
    This run consists
    in successive iterations where all messages sent and not yet delivered
    are delivered, and then all processes which are allowed to change round
    by the strategy do. This captures the run where every message is delivered
    as early as possible after being sent.

    \begin{definition}[Earliest Run of Strategy for Delivered Collection]
        Let \textit{CDel} be a Delivered collection and $f$ be a strategy.
        We now define
        for $r > 0$ the sets $\textit{dels}_r$ and $\textit{nexts}_r$, as well as
        the states $qDels_r$ and $qNexts_r$. The two sets are respectively
        the set of deliveries at iteration $r$ and the set of \textit{nexts}
        at iteration $r$. As for the states, they respectively capture
        the state of the system just before the first delivery of the $r$-ith
        iteration and just before the first \textit{next}
        of this iteration. We define them all by the following recurrence:
        \begin{itemize}
            \item \textbf{State at iteration $r$ after nexts (of $r$)}
                We have $qDels_1 = \langle 1, \emptyset \rangle^n$ and\\
                $\forall r > 1:
                \left(
                \begin{array}{ll}
                    \forall j \in \Pi: & \\

                    \textit{if next}_j \in \textit{nexts}_r &
                    \left(
                    \begin{array}{l}
                        \textit{qDels}_r.j = \textit{qNexts}_{r-1}.j
                        \textit{ Except }\\
                        \quad \textit{qDels}_r.j.round =
                        \textit{qNexts}_{r-1}.j.round + 1
                    \end{array}
                    \right)\\

                    \textit{otherwise} & \textit{qDels}_r.j =
                    \textit{qNexts}_{r-1}.j\\
                \end{array}
                \right)$
            \item \textbf{Delivers at iteration $r$}\\
                We have
                $\textit{dels}_1 = \{\textit{deliver}(1,k,j) \mid
                k \in \textit{CDel}(1,j)\}$ and
                $\forall r > 1: \textit{dels}_r = \{\textit{deliver}(r',k,j)
                \mid  r' = \textit{qDels}_r.j.round \land r' \neq
                \textit{qNexts}_{r-1}.j.round \land k \in \textit{CDel}(r',j)\}$.
            \item \textbf{State at iteration $r$ after deliveries}\\
                $\forall r > 0:
                \left(
                \begin{array}{l}
                    \forall j \in \Pi: \textit{qNexts}_r.j =
                    \textit{qDels}_r.j
                    \textit{ Except }\\
                    \quad\textit{qNexts}_r.j.received =
                    \textit{qDels}_r.j.received \cup
                    \{\langle r, k \rangle \mid
                    \langle r, k, j \rangle \in \textit{dels}_r\}\\
                \end{array}
                \right)$
            \item \textbf{Nexts allowed at iteration $r$}\\
                $\forall r > 0: \textit{nexts}_r = \{\textit{next}_j \mid
                \textit{qNexts}_r.j \in f\}$.
        \end{itemize}

        For all $r > 0$, let $Wdels_r$ be a permutation of
        $\textit{dels}_r$, and $\textit{Wnexts}_r$ a permutation of
        $\textit{nexts}_r$.

        Then, the run starting at the initial state and
        with the following transitions:
        $\sum\limits_{r > 0} (\textit{Wdels}_r . \textit{Wnexts}_r)$ -- with
        an \textit{end} at the end if it is a finite run -- is
        an \textbf{earliest run} of $f$ for \textit{CDel}.
    \end{definition}

    \begin{lemmabis}[Earliest Run is a Run]
        \label{trivialBis}
        Let \textit{CDel} be a Delivered collection and $f$ be a strategy.
        Then, an earliest run of $f$ for \textit{CDel} is a run of $f$.
    \end{lemmabis}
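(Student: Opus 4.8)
The plan is to verify that an earliest run satisfies the four defining constraints of a run (Initial state, Transitions, Delivery after sending, Unique delivery) and additionally the three conditions for being generated by the strategy $f$ (Next only if allowed, plus the relevant fairness condition). The overall structure mirrors the proof of Lemma~\ref{trivial}, but the bookkeeping is heavier because the transitions are defined by a mutual recurrence on the states $qDels_r$, $qNexts_r$ and the sets $dels_r$, $nexts_r$, so I would first establish that this recurrence indeed produces the state sequence claimed, and only then check each constraint against it.

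First I would confirm the run constraints. The initial state is $\langle 1, \emptyset\rangle^n = qDels_1$ by definition, so \textbf{Initial state} holds. For \textbf{Transitions}, I would argue by induction on the iteration index $r$ that the state reached after the block $Wdels_r . Wnexts_r$ equals $qNexts_r$ on the received-component and then $qDels_{r+1}$ after the nexts: each $deliver(r',k,j)$ in $dels_r$ adds exactly $\langle r',k\rangle$ to $j$'s received set, matching the definition of $qNexts_r.j.received$, and each $next_j$ in $nexts_r$ increments $j$'s round, matching the definition of $qDels_{r+1}.j.round$. Since no $end$ appears except optionally at the very end of a finite run, all three disjuncts of the Transitions clause are satisfied. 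For \textbf{Delivery after sending}, the key observation is that $dels_r$ only contains $deliver(r',k,j)$ with $r' = qDels_r.j.round$, i.e. a message is delivered to $j$ only for the round $j$ currently occupies; I would then note that the sender $k$ has reached at least that round by the time of delivery, because the earliest run delivers round-$r'$ messages only once the relevant processes have advanced, and $k \in CDel(r',j)$ forces $k$ to have broadcast at round $r'$. For \textbf{Unique delivery}, each $deliver(r',k,j)$ can appear in at most one $dels_r$ because the condition $r' = qDels_r.j.round$ pins down the unique iteration at which $j$ has round $r'$, and the round is strictly increasing across iterations whenever $j$ takes a $next$.

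Next I would check the strategy conditions, though strictly speaking the lemma only claims the run is a run of $f$, so I would verify \textbf{Next only if allowed} and the appropriate \textbf{fairness} clause. \textbf{Next only if allowed} is immediate: $next_j \in nexts_r$ exactly when $qNexts_r.j \in f$, and $qNexts_r.j$ is precisely $j$'s local state just before that $next$ fires, so the transition is only played from a state in $f$. For the fairness clauses I would split on whether the run is finite or infinite: in the infinite case, if $j$ takes only finitely many $next$ transitions then from some iteration on $nexts_r$ never contains $next_j$, which by definition of $nexts_r$ means $qNexts_r.j \notin f$ for all those infinitely many iterations, giving the required infinitely many non-$f$ states; in the finite case, the run ends precisely when no process is allowed to change round, so the last state has $qNexts.j \notin f$ for every $j$.

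\textbf{The main obstacle} I anticipate is the \textbf{Delivery after sending} constraint, because it requires relating the receiver's current round $qDels_r.j.round$ to the sender $k$'s round at the moment of delivery, and these two quantities evolve under independent recurrences; I must argue that by the time the earliest run schedules $deliver(r',k,j)$ the sender $k$ has itself reached round $r'$, which means tracing through how the lock-step-like iteration structure forces $k$'s round to be at least $r'$ whenever a round-$r'$ message is eligible for delivery in $dels_r$. Disentangling the indices in the mutual recurrence and being careful that the $r' \neq qNexts_{r-1}.j.round$ side-condition correctly excludes premature deliveries will be the delicate part; the remaining constraints are routine consequences of the definitions.
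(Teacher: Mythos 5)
Your plan coincides with the paper's proof: verify the four constraints defining a run, then the three constraints of Definition~\ref{def:runstrategy}. Your checks of the initial state, the transition clause, unique delivery, next-only-if-allowed, and both fairness clauses are correct, and are essentially more explicit versions of the paper's one-line arguments (your reading of the finite case -- the word ends exactly when no process is allowed to change round -- is the paper's).

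The step you yourself flag as the main obstacle, \textbf{Delivery after sending}, is a genuine gap in your proposal: your justification for it is circular and, worse, the claim fails under the literal definition of an earliest run. You assert that ``the earliest run delivers round-$r'$ messages only once the relevant processes have advanced'' -- which is exactly what has to be proven -- and that ``$k \in \textit{CDel}(r',j)$ forces $k$ to have broadcast at round $r'$'' -- which is not available in this formalism, since membership in a Delivered set only says the message is delivered at some point of the run, and whether that point comes after $k$ reaches round $r'$ is precisely the constraint at issue. The trouble is that $\textit{dels}_r$ is pinned to the \emph{receiver's} round: if $f$ and \textit{CDel} let $j$ advance while one of its senders stays blocked, the earliest run schedules a delivery whose sender has not reached the sending round. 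Concretely, take $\Pi=\{a,b\}$, $\textit{CDel}(r,a)=\{a,b\}$ and $\textit{CDel}(r,b)=\{a\}$ for every $r$, and $f$ the carefree strategy with $\textit{Nexts}_f=\{\Pi\}$: after iteration $1$, $a$ changes round but $b$ does not, so $\textit{dels}_2$ contains $\textit{deliver}(2,b,a)$ while $b$ is still at round $1$, violating Delivery after sending. Note that the paper's own proof buries the same difficulty in the sentence ``every message from round $r$ is delivered while the emitter is in round $r$'', which conflates emitter with receiver; that sentence holds only when the earliest run proceeds in lockstep, which is the case in every application of the lemma in the paper (the collections used there are symmetric, so all processes have equal local states and advance or block together), but it is not a consequence of the definitions for arbitrary $f$ and \textit{CDel}. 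So your instinct about where the delicate point lies is sound; closing it requires either restricting the statement (e.g.\ to collections and strategies that keep the earliest run in lockstep, proving lockstep by induction on iterations) or amending $\textit{dels}_r$ so that a message is delivered only once its sender has reached the round in which it was sent.
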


  \begin{proof}
        First, $t$ is a run since it satisfies the four constraints
        defining a run:
        \begin{itemize}
            \item The first state is the initial state by definition.
            \item Transitions change state as required by definition.
            \item Every message from round $r$ is delivered while the
                emitter is in round $r$,
                which ensures it is delivered after being sent.
            \item Every sent message is delivered during the
                round it was sent, and thus delivered only once.
        \end{itemize}

        It is also a run of $f$ because it satisfies the three constraints:
        \begin{itemize}
            \item By definition, processes only change round when allowed
                by $f$.
            \item If $t$ is infinite, then a process which has only finitely
                many changes of round is blocked at some round forever;
                this means that there are infinitely many iterations where
                $f$ does not continuously enable the change of round.
            \item If $t$ is finite on the other hand, this means that for
                all iterations from a point on, no process is allowed to
                change round, and thus that in their last state all
                processes are forbidden to change round by $f$.
        \end{itemize}
        We therefore deduce $t$ is a run of $f$.
    \end{proof}

    By combining standard and earliest runs, we show that any valid
    strategy for \textit{PDel}$^F$ is dominated by $f_{n-F}$  and thus
    that \textit{PHO}$_{f_{n-F}}(\textit{PDel}^F)$ is the dominating
    predicate for \textit{PDel}$^F$.

    \begin{theorem}[$f_{n-F}$ Dominates \textit{PDel}$^F$]
        \label{fnfDom}
        $f_{n-F}$ dominates \textit{PDel}$^F$.
    \end{theorem}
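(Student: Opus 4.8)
The plan is to unfold the definition of domination and reduce the statement to a single containment between generated Heard-Of predicates, then, for an arbitrary competing valid strategy, construct an explicit run realizing each target Heard-Of collection.

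First I would note that $f_{n-F}$ dominates \textit{PDel}$^F$ exactly when it is a greatest element for $\prec_{\textit{PDel}^F}$, that is, when every valid strategy $f'$ for \textit{PDel}$^F$ satisfies $f' \prec_{\textit{PDel}^F} f_{n-F}$, which by definition means $PHO_{f_{n-F}}(\textit{PDel}^F) \subseteq PHO_{f'}(\textit{PDel}^F)$. Using Theorem~\ref{n-fCharac} to replace the left-hand side by its explicit form, the whole theorem reduces to: for every valid strategy $f'$ and every Heard-Of collection \textit{cho} with $|\textit{cho}(r,j)| \geq n-F$ for all $r,j$, there is a run $t \in runs_{f'}(\textit{PDel}^F)$ with $CHO_t = \textit{cho}$. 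I fix such an $f'$ and such a \textit{cho} and aim to build this witnessing run.

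The construction combines the standard run of \textit{cho} with earliest-run reasoning. The standard run delivers exactly $\textit{cho}(r,j)$ on time at each round $r$ and the missing messages one round late, and by Lemma~\ref{trivial} it is a run of \textit{PDel}$^F$ whose Heard-Of collection is \textit{cho}; the only thing to establish is that some such schedule is generated by $f'$. The crucial ingredient is a membership claim certifying, for each $r$ and $j$, that the local state in which $j$ is at round $r$ having received precisely $\textit{cho}(r,j)$ for that round lies in $f'$. To prove it I would invoke validity of $f'$ through an auxiliary Delivered collection: deliver everything up to and including round $r$ except that $j$ receives only $\textit{cho}(r,j)$ at round $r$, and from round $r+1$ on deliver the constant set $\textit{cho}(r,j)$ to everyone. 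Because $|\textit{cho}(r,j)| \geq n-F$ this collection meets the cardinality bound, and since each kernel from round $r$ on equals $\textit{cho}(r,j)$ it also meets the inclusion constraint, so it belongs to \textit{PDel}$^F$. In the earliest run of $f'$ for this collection (a run by Lemma~\ref{trivialBis}), $j$ receives nothing beyond $\textit{cho}(r,j)$ at round $r$; were the corresponding state not in $f'$, then $j$ would block there forever, contradicting the validity of $f'$. Hence the state is in $f'$. With these certifications in hand I would assemble the witnessing run $t$ for the total collection, scheduling deliveries as in a standard run of \textit{cho} but firing each \textit{next}$_j$ exactly at the certified state: by construction $\textit{cho}(r,j)$ is on time and the rest is late, so $CHO_t = \textit{cho}$; the ``next only if allowed'' clause follows from the certifications, and the fairness clauses hold because every change of round is fired as soon as its enabling state is reached. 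Since the total collection lies in \textit{PDel}$^F$, we get $t \in runs_{f'}(\textit{PDel}^F)$ and thus $\textit{cho} \in PHO_{f'}(\textit{PDel}^F)$.

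I expect the main obstacle to be reconciling the fixed on-time sets of the standard run with the freedom $f'$ has to enable \textit{next} earlier: a general valid $f'$ may force $j$ to change round before it has accumulated the full past-round history that a verbatim standard-run schedule assumes, so the schedule cannot simply be the standard run. The delicate point is therefore to interleave the standard-run delivery pattern, which pins the on-time sets to \textit{cho}, with earliest-run gating, which respects exactly when $f'$ permits and, by fairness, requires a change, and to argue that the two are compatible -- that delivering only $\textit{cho}(r,j)$ on time never prevents $f'$ from eventually enabling the change, yet never makes $f'$ enable it so early that an on-time message of \textit{cho} is missed.
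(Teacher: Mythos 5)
Your high-level route is the same as the paper's: reduce domination to showing $PHO_{f_{n-F}}(\textit{PDel}^F) \subseteq PHO_{f'}(\textit{PDel}^F)$ for an arbitrary valid strategy $f'$, realize each target \textit{cho} by its standard run (your ``fire each \textit{next}$_j$ at the certified state'' schedule is exactly the standard run), and certify each such state as belonging to $f'$ by playing an earliest run of a blocking Delivered collection against the validity of $f'$. Incidentally, the obstacle you anticipate at the end is not one: the fairness clauses of Definition~\ref{def:runstrategy} never force a process to change round as soon as its state is in $f'$; they only constrain processes that change round finitely often. So nothing prevents the witnessing run from being literally the standard run, as in the paper.

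The genuine gap is in your auxiliary collection, which is asymmetric at round $r$: only $j$ has Delivered set $\textit{cho}(r,j)$ there while the others keep $\Pi$, with $\textit{cho}(r,j)$ for everyone from round $r+1$ on. Call $\sigma$ the state of $j$ at round $r$ with all of rounds $< r$ received and exactly $\textit{cho}(r,j)$ on time, and suppose $\sigma \notin f'$. If the full-reception states of the other processes lie in $f'$ and $j \in \textit{cho}(r,j)$, your argument collapses: in the earliest run the other processes advance to round $r+1$, and the earliest-run schedule must then deliver to them the round-$(r+1)$ message of every member of $\textit{CDel}(r+1,\cdot) = \textit{cho}(r,j)$, including the message of $j$ --- which $j$, stuck at round $r$, has never sent. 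This violates the ``delivery after sending'' constraint, so the object you built is not a run at all and Lemma~\ref{trivialBis} gives you nothing. Nor can you substitute another blocked run: freezing everyone at round $r$ violates infinite fairness for the unblocked processes, and letting them advance forces, by the definition of $runs(\textit{CDel})$, the eventual delivery of $j$'s round-$(r+1)$ message and hence $j$'s advancement; indeed $f'$ may well unblock $j$ exactly upon receipt of such future-round messages, so your collection may admit only valid runs even though $\sigma \notin f'$. The paper's collection is symmetric precisely to forestall this: $\textit{CDel}_{\textit{block}}(r',k) = \textit{cho}(r,j)$ for all $k$ and all $r' \geq r$, so all processes reach round $r$ in identical states; if $j$ blocks, everyone blocks simultaneously, there are no further deliveries or nexts, the earliest run halts as a legitimate finite run generated by $f'$ (finite fairness holds since every last state is outside $f'$), and its invalidity contradicts the validity of $f'$. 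Replace your collection by this symmetric one and the rest of your proof goes through, coinciding with the paper's.
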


    \begin{proof}
        Let $f$ be a valid strategy for \textit{PDel}$^F$; we now prove
        that $f \prec_{\textit{PDel}^F} f_{n-F}$, that is
        $PHO_{f_{n-F}}(\textit{PDel}^F) \subseteq PHO_f(\textit{PDel}^F)$.
        Let $\textit{cho} \in PHO_{f_{n-F}}(\textit{PDel}^F)$, and
        let $t$ be a standard run of \textit{cho}. Since $\textit{PDel}^F$
        contains the total collection, $t$ is a run of \textit{PDel}$^F$
        by Lemma~\ref{trivial}. We only need to prove that it is also a run
        of $f$ to conclude.

        We do so by contradiction.
        \textbf{Assume} $t$ is not a run of $f$:
        because $t$ is infinite, the problem
        is either the infinite fairness of \textit{next} or a \textit{next} done by a process $j$
        when $f$ does not allow it. Each  \textit{next} transition being played
        an infinite number of times in a standard run,
        the only possibility left is the second one.
        At the point of the forbidden \textit{next}, by definition of a
        standard run, $j$ has received
        every message from previous rounds, and all messages from
        $\textit{cho}(r,j)$. By application of Theorem~\ref{n-fCharac}
        and \textit{cho} being in \textit{PHO}$_{f_{n-F}}(\textit{PDel}^F)$,
        $\textit{cho}(r,j)$ contains at least $n-F$ processes.

        Let \textit{CDel}$_{\textit{block}}$ be the Delivered collection
        where all processes from which $j$ did not receive a message at
        the problematic \textit{next} in $t$ stop sending messages from
        this round on:
        $\forall r' > 0, \forall k \in \Pi:
        \textit{CDel}_{\textit{block}}(r',k) =
        \left\{
        \begin{array}{ll}
            \Pi & \textit{if } r' < r\\
            cho(r,j) & \textit{otherwise}\\
        \end{array}
        \right.$

        This is a Delivered collection of \textit{PDel}$^F$:
        processes that stop sending messages never do again,
        and at most $F$ processes do so because $\textit{cho}(r,j)$
        contains at least $n-F$ processes.

        Let $t_{\textit{block}}$ be an earliest run of $f$ for
        \textit{CDel}$_{\textit{block}}$.
        This is a run of $f$, by Lemma~\ref{trivialBis}.
        We then have two possibilities.
        \begin{itemize}
            \item During one of the first $r-1$ iterations of
                $t_{\textit{block}}$, there is some process which cannot change
                round. Let $r'$ be the smallest iteration where it happens,
                and $k$ be a process unable to change round at this
                iteration.
                By minimality of $r'$, all processes arrive at round $r'$,
                and by symmetry of $\textit{CDel}_{\textit{block}}$ they all
                receive the same messages as $k$. Thus, all processes
                are blocked at round $r'$, there are no more next or deliveries,
                and $t_{block}$ is therefore invalid.
            \item For the first $r-1$ iterations, all processes change round.
                Thus, every one arrives at round~$r$. By definition of
                an earliest run, all messages from the round are delivered
                before any \textit{next}. The symmetry of
                $\textit{CDel}_{\textit{block}}$ also ensures that every process
                received the same messages, that is all messages from
                round $< r$ and all messages from \textit{cho}$(r,j)$.
                These are exactly the messages received by $j$ in $t$ at
                round $r$.
                But by hypothesis, $j$ is blocked in this state in $t$.
                We thus deduce that all processes are blocked at round $r$
                in $t_{block}$, and thus that it is an invalid run.
        \end{itemize}
        Either way, we deduce that $f$ is invalid, which is a
        \textbf{contradiction}.
    \end{proof}

    This means that when confronted with a model captured by
    \textit{PDel}$^F$, there is no point in remembering messages from past
    rounds -- and messages from future rounds are simply buffered.
    Intuitively, messages from past rounds are of no use in detecting
    crashes in the current round. As for messages from
    future rounds, they could serve to detect that a process has not crashed when
    sending its messages from the current round. This does not
    alter the Heard-Of predicate because nothing forces messages from future
    rounds to be delivered early, and thus there is no way to systematically use
    the information from future rounds.

\section{Carefree Strategies}
\label{sec:memless}

    We now turn to more general
    results about Delivered predicates and strategies. We focus first
    on a restricted form of strategies, the carefree ones: they depend only
    on the received messages from the current round. For example, $f_{n-F}$ is
    a carefree strategy. These are quite simple strategies,
    yet they can be dominating, as shown for $f_{n-F}$ and \textit{PDel}$^F$.

    \subsection{Definition and Expressiveness Results}

    \begin{definition}[Carefree Strategy]
      \label{carefree}
        Let $f$ be a strategy and, $\forall q \in Q$, let $cfree(q) = \{
            k \in \Pi \mid \langle q.round, k \rangle \in q.received \}$.
        $f$ is a \textbf{carefree strategy}
        $\triangleq \forall q, q' \in Q:
        cfree(q) = cfree(q') \implies (q \in f \iff q' \in f)$.

        For $f$ a carefree strategy, let $\textit{Nexts}_{f} \triangleq \{cfree(q) \mid q \in f \}$.
        It uniquely defines $f$.
    \end{definition}

    Thus a carefree strategy can be defined by a set of sets
    of processes: receiving a message from all processes in any of those
    set makes the strategy authorize the change of round. This gives us a simple
    necessary condition on such a strategy to be valid: its \textit{Nexts} set
    must contains all Delivered set from the corresponding Delivered predicate.
    If it does not, then an earliest run of any collection containing a
    Delivered set not in the \textit{Nexts} would be invalid.

    This simple necessary condition also proves sufficient.

    \begin{lemma}[Validity of Carefree]
        \label{cfreeValid}
        Let \textit{PDel} be a Delivered predicate and
        $f$ a carefree strategy.
        Then, $f$ is valid for \textit{PDel} $\iff
        \forall \textit{CDel} \in \textit{PDel}, \forall r > 0, \forall j \in \Pi:
        \textit{CDel}(r,j) \in \textit{Nexts}_f$.
    \end{lemma}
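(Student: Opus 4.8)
The statement is an equivalence, so the plan is to prove each implication separately: the earliest run (Lemma~\ref{trivialBis}) drives the forward (necessity) direction by contraposition, while a blocking argument in the style of Lemma~\ref{nfValid} handles the backward (sufficiency) direction.

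For the necessary condition I would argue by contraposition. Suppose there is a $\textit{CDel} \in \textit{PDel}$ with $\textit{CDel}(r,j) \notin \textit{Nexts}_f$ for some $r > 0$ and $j \in \Pi$. I would exhibit an invalid run generated by $f$, namely an earliest run $t$ of $f$ for $\textit{CDel}$, which is a run of $f$ by Lemma~\ref{trivialBis}. The key point is that $j$ can never leave round $r$: if $j$ reaches round $r$ in $t$, then by definition of an earliest run all of its round-$r$ deliveries occur before the subsequent \textit{nexts}, so the value of $cfree$ for $j$'s state at round $r$ is exactly $\textit{CDel}(r,j)$; since $f$ is carefree and $\textit{CDel}(r,j) \notin \textit{Nexts}_f$, the state of $j$ is never in $f$ while $j$ is at round $r$, and no further round-$r$ message can ever arrive. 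Hence $j$ is stuck at round $r$. If instead $j$ never reaches round $r$, it is stuck at an earlier round. In both cases $j$ changes round only finitely often, so $t$ is invalid and $f$ is not valid for \textit{PDel}.

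For the sufficient condition I would assume the right-hand side and proceed by contradiction, taking some invalid $t \in runs_f(\textit{PDel})$ with $t \in runs(\textit{CDel})$ for a witnessing $\textit{CDel} \in \textit{PDel}$. Following the pattern of Lemma~\ref{nfValid}, let $r$ be the smallest round at which some process $j$ is blocked forever. Since $j$ reaches round $r$, the defining equivalence of $runs(\textit{CDel})$ forces every $k \in \textit{CDel}(r,j)$ to be delivered to $j$ at some point; after all of these deliveries, $cfree$ of $j$'s state equals $\textit{CDel}(r,j)$, which lies in $\textit{Nexts}_f$ by hypothesis, so $j$'s state is in $f$. As $j$ stays at round $r$ and no round-$r$ message beyond $\textit{CDel}(r,j)$ can arrive, $j$'s state remains in $f$ from that point on. This contradicts the infinite fairness of \textit{next} (which would require $j$'s state to lie outside $f$ infinitely often) when $t$ is infinite, and the analogous finite-fairness clause when $t$ is finite. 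Either way we reach a contradiction, so $f$ is valid.

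The main obstacle is that $\textit{Nexts}_f$ need not be closed under taking subsets: a proper subset of $\textit{CDel}(r,j)$ could belong to $\textit{Nexts}_f$ even when $\textit{CDel}(r,j)$ itself does not. This is precisely why the necessity direction must use the earliest run rather than an arbitrary run -- delivering every round-$r$ message before any \textit{next} guarantees that the relevant $cfree$ value is exactly $\textit{CDel}(r,j)$, so that a missing full set genuinely blocks the process. Symmetrically, in the sufficiency direction the delicate point is that enabling becomes \emph{permanent} once the full current-round set has been received; this is where being carefree (dependence only on current-round messages) is essential, as it converts a single enabling into continuous enabling and thereby contradicts fairness.
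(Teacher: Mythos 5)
Your proposal is correct and follows essentially the same route as the paper's proof: necessity via an earliest run of $f$ for the offending \textit{CDel} (Lemma~\ref{trivialBis}), using the carefree property to turn a single refusal into permanent blocking, and sufficiency via a fairness contradiction at the smallest blocked round. The only cosmetic difference is that you track only the process $j$ (either it never reaches round $r$, hence finitely many \textit{next}$_j$, or it reaches round $r$ and stays blocked), whereas the paper splits into a case analysis over any process blocking in the first $r-1$ iterations versus all processes reaching round $r$; both yield the same invalidity conclusion.
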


    \begin{proof}
        $(\Rightarrow)$ Let $f$ be valid for \textit{PDel}.
        We show by contradiction that it satisfies the right-hand side of the
        above equivalence.
        \textbf{Assume} there is
        $\textit{CDel} \in \textit{PDel}, r > 0$ and $j \in \Pi$ such that
        \textit{CDel}$(r,j) \notin \textit{Nexts}_f$.
        Then, let $t$ be an earliest run of $f$ for
        \textit{CDel}.
        This is a run of $f$ by Lemma~\ref{trivialBis}.

        The sought contradiction is reached by proving that $t$ is invalid.
        To do so, we split according to two cases.
        \begin{itemize}
            \item During one of the first $r-1$ iterations of
                $t$, there is some process which cannot change
                round. Let $r'$ be the smallest iteration where it happens,
                and $k$ be a process unable to change round at the $r'$-ith
                iteration.

                By minimality of $r'$, all processes arrive at round $r'$ in
                $t$; by definition of an earliest run, all messages for $k$
                from round $r'$ are delivered before the \textit{next}
                for the iteration.
                Let $q$ the local state of $k$ at the
                first \textit{next} in the $r'$-ith iteration, and
                let $q'$ be any local state of $k$ afterward. 
                The above tells us that as long as $q'.round = q.round$,
                we have $cfree(q) = cfree( q')$ and thus $q' \notin f$.
                Therefore, $k$ can never change round while at round $r'$.

                We conclude that $t$ is invalid.
            \item For the first $r-1$ iterations, all processes change round.
                Thus every one arrives at round $r$ in the $r-1$-ith iteration.
                By definition of
                an earliest run, all messages from the round are delivered
                before any \textit{next}. By hypothesis, $j$ cannot
                change round because its Delivered set is not in
                $\textit{Nexts}_f$.
                Let $q$ the local state of $j$ at the
                first \textit{next} in the $r$-ith iteration, and
                let $q'$ be any local state of $j$ afterward. 
                The above tells us that as long as $q'.round = q.round$,
                we have $cfree(q) = cfree(q') = \textit{CDel}(q.round,j)$
                and thus $q' \notin f$. Therefore, $j$ can never
                change round while at round $r$.

                Here too, $t$ is invalid.
        \end{itemize}

        Either way, we reach a \textbf{contradiction} with the validity of $f$.

        $(\Leftarrow)$ Let \textit{PDel} and $f$ such that
        $\forall \textit{CDel} \in \textit{PDel},
        \forall r > 0, \forall j \in \Pi: \textit{CDel}(r,j) \in \textit{Nexts}_f$.
        We show by contradiction that $f$ is valid.

        \textbf{Assume} the contrary: there is some
        $t \in \textit{runs}_f(\textit{PDel})$ which is invalid. Thus, there
        are some process blocked at a round forever in $t$. Let $r$ be the
        smallest such round, and $j$ be a process blocked at round $r$ in $t$.
        By minimality of $r$, all processes arrive at round $r$.
        By definition of a run of \textit{PDel}, there is a
        $\textit{CDel} \in \textit{PDel}$ such that $t$ is a run of \textit{CDel}.
        Thus, eventually all messages from $\textit{CDel}(r,j)$ are delivered.

        From this point on, $f$ allows $j$ to change round by definition, and
        the fairness of \textit{next} imposes that $j$ does at some point.
        We conclude that $j$ is not blocked at round $r$ in $t$,
        which \textbf{contradicts} the hypothesis.
    \end{proof}

    Carefree strategies are elegant, but they also have some drawbacks:
    mainly that the Heard-Of predicates they can implement are quite basic.
    Precisely, when the Delivered predicate contains the total collection,
    they implement predicates where the Heard-Of collections are
    all possible combinations of Delivered sets from the original Delivered
    predicate.

    \begin{theorem}[Heard-Of Predicates of Carefree Strategy]
        \label{cfreeHO}
        Let \textit{PDel} be a Delivered predicate containing the total
        collection, and let $f$ be a valid carefree strategy for \textit{PDel}.
        Then, $\forall \textit{cho}$ a Heard-Of collection for $\Pi:
        \textit{cho} \in CHO_f(\textit{PDel}) \iff
        \forall r > 0, \forall j \in \Pi: \textit{cho}(r,j) \in
        \textit{Nexts}_f$.
    \end{theorem}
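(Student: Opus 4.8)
The plan is to prove the two implications separately, both hinging on a single observation: at the moment a process $j$ performs the $next_j$ transition carrying it from round $r$ to round $r+1$ in any run, its local state $q$ satisfies $q.round = r$ and $cfree(q) = \textit{cho}(r,j)$, where \textit{cho} is the Heard-Of collection of the run. This identity is what links the carefree condition (phrased in terms of $cfree$) to the Heard-Of sets, and it is the heart of the whole argument.

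For the forward implication, I would take $\textit{cho} \in CHO_f(\textit{PDel})$ together with a run $t \in runs_f(\textit{PDel})$ generating it. Fixing $r$ and $j$, I consider the $next_j$ transition that moves $j$ from round $r$ to $r+1$, with $q$ the local state of $j$ just before it. By the \emph{Next only if allowed} clause, $q \in f$. Since $f$ is carefree, $q \in f$ is equivalent to $cfree(q) \in \textit{Nexts}_f$. By the definition of the Heard-Of collection of a valid run, $\textit{cho}(r,j)$ is exactly the set of senders from round $r$ received by $j$ before this transition, which equals $cfree(q)$ because $q.round = r$. This immediately yields $\textit{cho}(r,j) \in \textit{Nexts}_f$.

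For the backward implication, I would exhibit a witness run generating a given \textit{cho} that satisfies $\textit{cho}(r,j) \in \textit{Nexts}_f$ for all $r,j$. The natural candidate is a standard run $t$ of \textit{cho}. Because \textit{PDel} contains the total collection, Lemma~\ref{trivial} gives that $t$ is a run of \textit{PDel}. The shape of a standard run -- \textit{lateMes}, then \textit{onTimeMes}, then \textit{nexts} at each round -- ensures two facts. First, the Heard-Of collection of $t$ is exactly \textit{cho}, since the only current-round messages delivered to $j$ before its $next$ at round $r$ are those in $\textit{cho}(r,j)$. Second, at the moment of that $next$ the state $q$ of $j$ has $q.round = r$ and $cfree(q) = \textit{cho}(r,j)$; the late messages delivered at the start of round $r$ carry sender-round $r-1$ and therefore do not count toward $cfree(q)$.

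Finally, I would check that $t$ is generated by $f$. The only nontrivial clause is \emph{Next only if allowed}: from $cfree(q) = \textit{cho}(r,j) \in \textit{Nexts}_f$ (the hypothesis) and the carefree property, we conclude $q \in f$. The two fairness clauses hold trivially, since in a standard run every process changes round infinitely often, so $t$ is infinite and no process is ever permanently blocked. Hence $t \in runs_f(\textit{PDel})$; it is valid because $f$ is a valid strategy for \textit{PDel}, and $CHO_t = \textit{cho}$, giving $\textit{cho} \in CHO_f(\textit{PDel})$. I expect the main obstacle to be the bookkeeping in this backward direction, namely simultaneously establishing that the standard run reproduces \textit{cho} and that each change-of-round state lies in $f$; both reduce to the identity $cfree(q) = \textit{cho}(r,j)$, so once that is pinned down the rest is routine.
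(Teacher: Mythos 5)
Your proof is correct and follows essentially the same route as the paper's: the forward direction reads off $\textit{cho}(r,j) \in \textit{Nexts}_f$ from the \emph{Next only if allowed} clause and the carefree property, and the backward direction builds a standard run of \textit{cho}, invokes Lemma~\ref{trivial} via the total collection, and checks the run is generated by $f$. The only difference is that you make explicit the bookkeeping the paper leaves implicit (the identity $cfree(q) = \textit{cho}(r,j)$ at each change of round, the fairness clauses, and that the standard run's Heard-Of collection is indeed \textit{cho}), which is a welcome tightening rather than a new idea.
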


    \begin{proof}
        $(\Rightarrow)$ This direction follows from
        the definition of a carefree strategy: it allows changing round
        only when the messages received from the current round form a set
        in its \textit{Nexts}.

        $(\Leftarrow)$ Let \textit{cho} be a Heard-Of collection for $\Pi$
        such that $\forall r > 0, \forall j \in \Pi: \textit{cho}(r,j) \in
        \textit{Nexts}_f$. Let $t$ be a standard run of \textit{cho}. It is
        a run by Lemma~\ref{trivial}. It is also a run of $f$ because
        at each round, processes receive messages from a set in $\textit{Nexts}_f$
        and are thus allowed by $f$ to change round.
        We conclude that $\textit{cho} \in \textit{PHO}_f(\textit{PDel})$.
    \end{proof}

    Finally, carefree strategies always have a dominating element
    for a given Delivered predicate.
        This is because the strategy waiting
        for exactly the Delivered sets of the predicate waits for more
        messages than any other valid carefree strategy, by Lemma~\ref{cfreeValid}.

    \begin{theorem}[Always a Dominating Carefree Strategy]
        \label{domCfree}
        Let \textit{PDel} be a Delivered predicate and
        let $f_{cfDom}$ be the carefree strategy with \textit{Nexts}$_f =
        \{ \textit{CDel}(r,j) \mid \textit{CDel} \in \textit{PDel}
                                    \land r > 0 \land j \in \Pi\}$.
        $f_{cfDom}$ dominates all carefree strategies for \textit{PDel}.
    \end{theorem}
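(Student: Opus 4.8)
The plan is to establish two things: first that $f_{cfDom}$ is itself a valid carefree strategy for \textit{PDel} (so that it lies in the domain of the domination order), and second that for every valid carefree strategy $f$ for \textit{PDel} we have $f \prec_{\textit{PDel}} f_{cfDom}$, that is $PHO_{f_{cfDom}}(\textit{PDel}) \subseteq PHO_f(\textit{PDel})$. The validity of $f_{cfDom}$ is immediate from Lemma~\ref{cfreeValid}: by construction $\textit{Nexts}_{f_{cfDom}} = \{\textit{CDel}(r,j) \mid \textit{CDel} \in \textit{PDel} \land r > 0 \land j \in \Pi\}$, so the right-hand side of the equivalence in Lemma~\ref{cfreeValid} holds trivially.

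The key observation, again coming from Lemma~\ref{cfreeValid}, is that $f_{cfDom}$ has the smallest possible \textit{Nexts} set among valid carefree strategies. Indeed, if $f$ is any valid carefree strategy for \textit{PDel}, then Lemma~\ref{cfreeValid} forces $\textit{CDel}(r,j) \in \textit{Nexts}_f$ for every $\textit{CDel} \in \textit{PDel}$, $r > 0$ and $j \in \Pi$; hence $\textit{Nexts}_{f_{cfDom}} \subseteq \textit{Nexts}_f$. Intuitively, $f_{cfDom}$ waits for exactly the Delivered sets of the predicate and nothing less, so any other valid carefree strategy is at least as permissive about when to change round.

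With this inclusion in hand, I would prove the domination by reusing the very same run. Let $\textit{cho} \in PHO_{f_{cfDom}}(\textit{PDel})$ and let $t \in runs_{f_{cfDom}}(\textit{PDel})$ be a run generating it, so $t \in runs(\textit{CDel})$ for some $\textit{CDel} \in \textit{PDel}$. I claim $t$ is also a run of $f$, which suffices, since then $t \in runs_f(\textit{PDel})$ and $CHO_t = \textit{cho}$, giving $\textit{cho} \in PHO_f(\textit{PDel})$. To verify the three conditions of Definition~\ref{def:runstrategy}: at each transition $next_j$ in $t$ taking $j$ from round $r$ to $r+1$, the carefree value $cfree$ of $j$'s state equals $\textit{cho}(r,j)$, and since $t$ obeys $f_{cfDom}$ we get $\textit{cho}(r,j) \in \textit{Nexts}_{f_{cfDom}} \subseteq \textit{Nexts}_f$; by the carefree property of $f$ this puts $j$'s state in $f$, so ``next only if allowed'' holds. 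For fairness, since $f_{cfDom}$ is valid the run $t$ is valid, hence every process changes round infinitely often; thus the antecedent of ``infinite fairness of next'' never fires, and ``finite fairness'' is vacuous because $t$ is infinite.

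The main obstacle, such as it is, lies in the fairness bookkeeping of the last step: one must be sure that transplanting $t$ from $f_{cfDom}$ to the more permissive $f$ cannot violate the fairness-of-next requirement of $f$. This is precisely where the validity of $f_{cfDom}$ is essential — it guarantees that $t$ is valid, so that no process has only finitely many rounds and the fairness conditions of $f$ are discharged trivially rather than requiring any modification of the schedule.
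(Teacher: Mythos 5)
Your proof is correct and follows essentially the same route as the paper's: validity of $f_{cfDom}$ and the inclusion $\textit{Nexts}_{f_{cfDom}} \subseteq \textit{Nexts}_f$ for every valid carefree $f$ both come from Lemma~\ref{cfreeValid}, and domination follows by showing every run generated by $f_{cfDom}$ is also a run generated by $f$, hence $PHO_{f_{cfDom}}(\textit{PDel}) \subseteq PHO_f(\textit{PDel})$. If anything, your explicit discharge of the fairness conditions (using the validity of $f_{cfDom}$ to render them vacuous) is more careful than the paper, which infers $runs_{f_{cfDom}}(\textit{PDel}) \subseteq runs_f(\textit{PDel})$ directly from the pointwise inclusion $f_{cfDom} \subseteq f$ without noting that the fairness clauses are not monotone under that inclusion.
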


    \begin{proof}
        First, $f_{cfDom}$ is valid for \textit{PDel} by application of
        Lemma~\ref{cfreeValid}.

        As for domination, we also deduce from Lemma~\ref{cfreeValid} that
        $\forall f$ a valid carefree strategy for \textit{PDel},
        $\textit{Nexts}_{f_{cfDom}} \subseteq \textit{Nexts}_f$ and thus
        $f_{cfDom} \subseteq f$. Therefore,
        $\forall q \in Q:
        q \in f_{cfDom} \implies q \in f$. This gives us
        $runs_{f_{cfDom}}(\textit{PDel}) \subseteq runs_f(\textit{PDel})$,
        and we conclude $PHO_{f_{cfDom}}(\textit{PDel})
        \subseteq PHO_f(\textit{PDel})$.
        Therefore, $f_{cfDom}$ dominates all valid carefree strategies
        for \textit{PDel}.
    \end{proof}

    In the case where \textit{PDel} allows the delivery of all messages,
    that is contains the total collection, there is only one carefree strategy
    which dominates all carefree strategies.

    \begin{theorem}[With Total Collection, Unique Dominating Carefree Strategy]
        \label{uniqDomCfree}
        Let \textit{PDel} be a Delivered predicate containing
        the total collection. Let $f_{cfDom}$ be the carefree strategy
        with \textit{Nexts}$_f =
        \{ \textit{CDel}(r,j) \mid \textit{CDel} \in \textit{PDel}
                                    \land r > 0 \land j \in \Pi\}$.
        $f_{cfDom}$ is the unique carefree strategy which dominates
        all carefree strategies for \textit{PDel}.
    \end{theorem}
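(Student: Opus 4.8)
The plan is to lean on the two immediately preceding theorems. Theorem~\ref{domCfree} already establishes that $f_{cfDom}$ dominates all valid carefree strategies, so the existence of a dominating carefree strategy is settled and only \emph{uniqueness} remains to be shown. I would therefore let $g$ be an arbitrary carefree strategy that dominates all carefree strategies and prove $g = f_{cfDom}$. Since $g$ participates in the domination order, it is in particular a \emph{valid} carefree strategy, so Theorem~\ref{domCfree} applies to it as well.

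The first step is to show that $g$ and $f_{cfDom}$ generate exactly the same Heard-Of predicate, i.e.\ $PHO_g(\textit{PDel}) = PHO_{f_{cfDom}}(\textit{PDel})$. One inclusion follows because $g$ dominates the carefree strategy $f_{cfDom}$, which by the definition of domination gives $PHO_g(\textit{PDel}) \subseteq PHO_{f_{cfDom}}(\textit{PDel})$. The reverse inclusion follows from Theorem~\ref{domCfree}: since $f_{cfDom}$ dominates the valid carefree strategy $g$, we get $PHO_{f_{cfDom}}(\textit{PDel}) \subseteq PHO_g(\textit{PDel})$. Combining the two inclusions yields equality of the generated predicates.

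The second step transfers this equality of predicates into an equality of \textit{Nexts} sets, and this is where the total collection hypothesis enters through Theorem~\ref{cfreeHO}. The key device is to probe each candidate set $S \in \mathcal{P}(\Pi)$ with the constant Heard-Of collection $cho_S$ defined by $cho_S(r,j) = S$ for all $r > 0$ and $j \in \Pi$. Theorem~\ref{cfreeHO} gives $cho_S \in PHO_g(\textit{PDel}) \iff S \in \textit{Nexts}_g$, and the analogous equivalence for $f_{cfDom}$. Feeding in the equality $PHO_g(\textit{PDel}) = PHO_{f_{cfDom}}(\textit{PDel})$ then yields $S \in \textit{Nexts}_g \iff S \in \textit{Nexts}_{f_{cfDom}}$ for every $S$, hence $\textit{Nexts}_g = \textit{Nexts}_{f_{cfDom}}$. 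Since a carefree strategy is uniquely determined by its \textit{Nexts} set (Definition~\ref{carefree}), this forces $g = f_{cfDom}$ and completes the uniqueness argument.

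The step I expect to be the crux is this transfer from predicate equality to \textit{Nexts} equality, and it is precisely here that containing the total collection is indispensable: Theorem~\ref{cfreeHO} is the only tool giving an exact, set-by-set description of the generated Heard-Of predicate, and it requires the total collection. Without that hypothesis the generated predicate need not expose every set in \textit{Nexts}, so the constant-collection probe could fail and two distinct carefree strategies could conceivably generate the same predicate. Everything else is a routine assembly of Theorems~\ref{domCfree} and~\ref{cfreeHO}.
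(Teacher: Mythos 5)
Your proof is correct, and it is organized differently from the paper's even though it leans on the same two pillars (Theorems~\ref{domCfree} and~\ref{cfreeHO}). The paper proves uniqueness by contraposition: given any valid carefree $f \neq f_{cfDom}$, Lemma~\ref{cfreeValid} forces $\textit{Nexts}_{f_{cfDom}} \subsetneq \textit{Nexts}_f$, and a \emph{single} witness $D \in \textit{Nexts}_f \setminus \textit{Nexts}_{f_{cfDom}}$ yields, via the constant collection and Theorem~\ref{cfreeHO}, a Heard-Of collection generated by $f$ but not by $f_{cfDom}$, so $f$ fails to dominate $f_{cfDom}$. You instead run an antisymmetry argument: a dominating carefree $g$ and $f_{cfDom}$ dominate each other, hence generate the same predicate, and then you probe with \emph{every} constant collection $cho_S$ to conclude $\textit{Nexts}_g = \textit{Nexts}_{f_{cfDom}}$, hence $g = f_{cfDom}$. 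Your decomposition buys a clean intermediate fact the paper never states explicitly --- under the total-collection hypothesis, the map from valid carefree strategies to their generated Heard-Of predicates is injective --- and it avoids Lemma~\ref{cfreeValid} in the uniqueness step entirely; the paper's version is more economical (one separating set rather than all of $\mathcal{P}(\Pi)$) and pinpoints exactly which collection witnesses the failure of domination. Your closing remark about where the total collection is indispensable matches the paper's usage of it precisely: it is consumed only through Theorem~\ref{cfreeHO}.
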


    \begin{proof}
        First, $f_{cfDom}$ dominates all carefree strategies for \textit{PDel}
        by Theorem~\ref{domCfree}.
        To show uniqueness, let $f$ be a valid carefree strategy different
        from $f_{cfDom}$; thus $\textit{Nexts}_{f} \neq
        \textit{Nexts}_{f_{cfDom}}$. By Lemma~\ref{cfreeValid}, we also have
        $\textit{Nexts}_{f_{cfDom}} \subseteq \textit{Nexts}_f$. Therefore,
        $\textit{Nexts}_{f_{cfDom}} \subsetneq \textit{Nexts}_f$.

        Let $D \in \textit{Nexts}_f \setminus \textit{Nexts}_{f_{cfDom}}$
        and let $\textit{cho} \in
        (\mathbb{N}^* \times \Pi) \mapsto \mathcal{P}(\Pi)$ the Heard-Of
        collection such that $\forall r > 0, \forall j \in \Pi:
        \textit{cho}(r,j) = D$. By application of Theorem~\ref{cfreeHO},
        this is a Heard-Of collection generated by $f$ but not
        by $f_{cfDom}$. Therefore,
        $\textit{PHO}_{f}(\textit{PDel}) \not\subset
        \textit{PHO}_{f_{cfDom}}(\textit{PDel})$,
        and $f'$ does not dominate $f_{cfDom}$.
    \end{proof}

    \subsection{When Carefree is Enough}

    Finally, the value of carefree strategy depends on which Delivered
    predicates have such a dominating strategy. We already know that
    \textit{PDel}$^F$ does; we now extend this result to a class
    of Delivered predicates called Round-symmetric.
    This condition captures the fact that given any Delivered set $D$,
    one can build, for any $r > 0$, a Delivered collection where processes
    receive all messages up to round $r$, and then they share $D$ as their
    Delivered set in round $r$. As a limit case, the predicate also contains
    the total collection.

    \begin{definition}[Round-Symmetric Delivered Predicate]
        Let \textit{PDel} be a Delivered Predicate.
        \textit{PDel} is \textbf{round-symmetric} $\triangleq$
        \begin{itemize}
            \item \textbf{(Total collection)}
                \textit{PDel} contains the total collection: $\textit{CDel}_{total} \in \textit{PDel}$ where $\textit{CDel}_{total}$ is defined by $\forall r > 0, \forall j \in \Pi: \textit{CDel}_{\textit{total}}(r,j) = \Pi$.
            \item \textbf{(Symmetry up to a round)} $\forall D \in
                \{ \textit{CDel}(r,j) \mid \textit{CDel} \in \textit{PDel}
                \land r > 0 \land j \in \Pi\},\\
                \forall r > 0,
                \exists \textit{CDel} \in \textit{PDel}, \forall j \in \Pi:
                    (\forall r' < r: \textit{CDel}(r',j) = \Pi \land \textit{CDel}(r,j) = D)
                $
        \end{itemize}
    \end{definition}

    What round-symmetry captures is what makes \textit{PDel}$^F$
    be dominated by a carefree strategy: the inherent symmetry of these Delivered
    collections allows us to block processes with exactly the same received messages.
    This allows us to show that any valid strategy should allow changing 
    round at this point, which is fundamental to any proof of domination.

    \begin{theorem}[Sufficient Condition of Carefree Domination]
        \label{suffCfreeDom}
        Let \textit{PDel} be a Round-symmetric Delivered predicate.
        Then, there is a carefree strategy which dominates \textit{PDel}.
    \end{theorem}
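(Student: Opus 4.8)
The plan is to show that the carefree strategy $f_{cfDom}$ of Theorem~\ref{domCfree}, with $\textit{Nexts}_{f_{cfDom}} = \{\textit{CDel}(r,j) \mid \textit{CDel} \in \textit{PDel} \land r > 0 \land j \in \Pi\}$, is not merely dominating among carefree strategies but is a greatest element for $\prec_{\textit{PDel}}$ over \emph{all} valid strategies. By Theorem~\ref{domCfree} it is already valid and dominates every carefree strategy, so the only new work is to show $f_{cfDom}$ dominates an arbitrary valid strategy $f$, that is $\textit{PHO}_{f_{cfDom}}(\textit{PDel}) \subseteq \textit{PHO}_f(\textit{PDel})$. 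This reuses the shape of the proof of Theorem~\ref{fnfDom}, with round-symmetry playing the role that the concrete structure of \textit{PDel}$^F$ played there.

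First I would fix $\textit{cho} \in \textit{PHO}_{f_{cfDom}}(\textit{PDel})$. Since a round-symmetric predicate contains the total collection, Theorem~\ref{cfreeHO} applies and gives $\textit{cho}(r,j) \in \textit{Nexts}_{f_{cfDom}}$ for every $r$ and $j$; in particular each $\textit{cho}(r,j)$ is a Delivered set occurring in some collection of \textit{PDel}. Taking a standard run $t$ of \textit{cho}, Lemma~\ref{trivial} makes $t$ a run of \textit{PDel}, so it suffices to prove that $t$ is also a run of $f$: then $\textit{cho} = CHO_t \in \textit{PHO}_f(\textit{PDel})$ and domination follows.

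The core is a proof by contradiction. Assuming $t$ is not a run of $f$ and using that a standard run plays every \textit{next} infinitely often, the only possible fault is a forbidden \textit{next}$_j$ at a smallest round $r$, at a local state $q \notin f$ in which $j$ has received every message from rounds $< r$ and exactly the messages from $\textit{cho}(r,j)$. As $\textit{cho}(r,j)$ is a Delivered set of \textit{PDel}, the symmetry-up-to-a-round clause, instantiated with $D = \textit{cho}(r,j)$ and this $r$, produces a collection $\textit{CDel}_{\textit{block}} \in \textit{PDel}$ in which every process receives everything for rounds $< r$ and shares $\textit{cho}(r,j)$ as its Delivered set at round $r$. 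I would then take an earliest run $t_{\textit{block}}$ of $f$ for $\textit{CDel}_{\textit{block}}$, which is a run of $f$ by Lemma~\ref{trivialBis}, and split exactly as in Theorem~\ref{fnfDom}: either some process is blocked during the first $r-1$ iterations, where the symmetry of $\textit{CDel}_{\textit{block}}$ on rounds $< r$ forces all processes into one common state so that all are blocked; or every process reaches round $r$, at which point the earliest-run discipline has delivered all round-$r$ messages and the symmetry of $\textit{CDel}_{\textit{block}}$ makes every local state coincide with $q$. In both cases every process is stuck, $t_{\textit{block}}$ is invalid, and this contradicts the validity of $f$, so $t$ must be a run of $f$.

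The step I expect to be the main obstacle is the exact state-matching in the second case: I must argue that in $t_{\textit{block}}$ every process reaches, at round $r$, literally the state $q$ that blocks $j$ in $t$ -- identical round and identical \textit{received} set down to rounds and senders. This hinges on checking that both runs stay in lockstep up to round $r$ so that no future-round message is delivered, and that the on-time/late delivery pattern of the standard run deposits precisely rounds $< r$ from everyone together with $\textit{cho}(r,j)$ from round $r$, matching what the earliest run delivers under $\textit{CDel}_{\textit{block}}$. Notably $f$ need not be carefree in this argument: because I use genuine equality of local states rather than equality of current-round messages, the fact that $q \notin f$ transfers directly to every process without invoking any carefree property.
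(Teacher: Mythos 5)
Your proposal is correct and follows essentially the same route as the paper's own proof: a standard run of \textit{cho}, a contradiction at the first forbidden \textit{next}, round-symmetry instantiated with $D = \textit{cho}(r,j)$ to build $\textit{CDel}_{\textit{block}}$, an earliest run of the arbitrary valid strategy, and the same two-case analysis (blocked before round $r$, or all processes reaching round $r$ in exactly the blocking state) forcing invalidity. The only immaterial difference is that you instantiate the dominating carefree strategy directly as $f_{cfDom}$ and apply Theorem~\ref{cfreeHO} to it, whereas the paper starts from an abstract carefree strategy dominating all carefree ones and pins it down to that same $f_{cfDom}$ via the uniqueness result (Theorem~\ref{uniqDomCfree}).
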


    \begin{proof}
        Let $f$ be a carefree strategy dominating all carefree strategies
        for \textit{PDel} -- it exists
        by Theorem~\ref{domCfree} -- and let $f'$ be a valid strategy
        for \textit{PDel}.
        We now prove that $f' \prec_{\textit{PDel}^F} f$, that is
        $PHO_f(\textit{PDel}) \subseteq PHO_{f'}(\textit{PDel})$.
        Let also $\textit{cho} \in PHO_f(\textit{PDel})$
        and $t$ be a standard run of \textit{cho}.

        By Lemma~\ref{trivial}, $t$ is a run; we now prove by contradiction
        it is also a run of $f'$. \textbf{Assume} it is not:
        because $t$ is infinite, the problem
        is either the infinite fairness of \textit{next} or a \textit{next} done
        when $f'$ does not allow it. Each  \textit{next} transition being played
        an infinite number of times in a standard run,
        the only possibility left is the second one: some \textit{next}
        transition in $t$ is done while
        $f'$ does not allow the corresponding process to change round.
        There thus exists a smallest $r$ such that some process $j$ is not allowed
        by $f'$ to change round when \textit{next}$_j$ is played at round $r$
        in $t$.
        Because \textit{PDel} contains the total collection, there is only
        one carefree strategy  dominating all carefree strategies for
        \textit{PDel} by Theorem~\ref{uniqDomCfree}. This is the strategy
        from Theorem~\ref{domCfree}. Thus, the application of
        Theorem~\ref{cfreeHO} yields that $\textit{cho}(r,j) \in
        \textit{Nexts}_f =
        \{ \textit{CDel}(r,j) \mid \textit{CDel} \in \textit{PDel}
                                   \land r > 0 \land j \in \Pi\}$.

        Next, the round-symmetry of \textit{PDel} allows us to build
        $\textit{CDel}_{block}
        \in \textit{PDel}$ such that
        $\forall r' < r, \forall k \in \Pi:
        \textit{CDel}_{block}(r',k) = \Pi$ and
        $\forall k \in \Pi:
        \textit{CDel}_{block}(r,k) = \textit{cho}(r,j)$.

        Finally, we build $t_{block}$, an earliest run of $f'$
        for \textit{CDel}$_{block}$.
        By Lemma~\ref{trivialBis}, we know $t_{block}$ is a run of $f'$.
        We then have two possibilities.
        \begin{itemize}
            \item During one of the first $r-1$ iterations of
                $t_{\textit{block}}$, there is some process which cannot change
                round. Let $r'$ be the smallest iteration where it happens,
                and $k$ be a process unable to change round at the $r'$-ith
                iteration.

                By minimality of $r'$, all processes arrive at round $r'$,
                and by symmetry of $\textit{CDel}_{\textit{block}}$ they all
                receive the same messages as $k$. Thus, all processes
                are blocked at round $r'$, there are no more next or deliveries,
                and $t_{block}$ is therefore invalid.
            \item For the first $r-1$ iterations, all processes change round.
                Thus, every one arrives at round $r$. By definition of
                an earliest run, all messages from the round are delivered
                before any \textit{next}. The symmetry of
                $\textit{CDel}_{\textit{block}}$ also ensures that every process
                received the same messages, that is all messages from
                round $< r$ and all messages from \textit{cho}$(r,j)$.
                These are exactly the messages received by $j$ in $t$ at
                round $r$.
                But by hypothesis, $j$ is blocked in this state in $t$.
                We thus deduce that all processes are blocked at round $r$
                in $t_{block}$, and thus that it is an invalid run.
        \end{itemize}

        Either way, we deduce that $f'$ is invalid, which is a
        \textbf{contradiction}.
    \end{proof}

    As another example of a Delivered predicate satisfying this condition,
    we study the model where at most $B$ broadcasts per round can fail:
    either all processes receive the message sent by a process at a round
    or none does; there are at most $B$ processes per round that can be in the
    latter case, where no one receives their message.
    The Delivered predicate states that each process receives the kernel
    of the round, and this kernel contains at least $n-B$ processes.

    \begin{definition}[\textit{PDel}$^B$]
        The Delivered predicate \textit{PDel}$^B$ corresponding to
        at most $B$ full broadcast failures is:\\
        $\{\textit{CDel} \in (\mathbb{N}^* \times \Pi) \mapsto \mathcal{P}(\Pi) \mid
        \forall r > 0, \forall j \in \Pi: \textit{CDel}(r,j) = K_{\textit{CDel}}(r)
        \land |K_{\textit{CDel}}(r)| \geq n-B\}$.
    \end{definition}

    The astute reader might have noticed that the carefree strategy dominating
    all carefree for this Delivered predicate is $f_{n-B}$ (which is $f_{n-F}$ with
    $F$ instantiated to $B$).

    \begin{lemma}[$f_{n-B}$ Carefree Dominates \textit{PDel}$^B$]
        $f_{n-B}$ dominates carefree strategies for \textit{PDel}$^B$.
    \end{lemma}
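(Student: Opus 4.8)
The plan is to identify $f_{n-B}$ with the dominating carefree strategy $f_{cfDom}$ of Theorem~\ref{domCfree} applied to $\textit{PDel}^B$, so that the lemma becomes an immediate consequence of that theorem. Since $f_{n-B}$ is carefree (it is $f_{n-F}$ instantiated at $F = B$) and a carefree strategy is completely determined by its $\textit{Nexts}$ set (Definition~\ref{carefree}), it suffices to prove that $\textit{Nexts}_{f_{n-B}}$ equals the set $\{\textit{CDel}(r,j) \mid \textit{CDel} \in \textit{PDel}^B \land r > 0 \land j \in \Pi\}$ of all Delivered sets occurring in $\textit{PDel}^B$, which is precisely $\textit{Nexts}_{f_{cfDom}}$.

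First I would unfold the two sides. By definition of the waiting strategy, $q \in f_{n-B}$ iff $|cfree(q)| \geq n-B$, so $\textit{Nexts}_{f_{n-B}} = \{K \subseteq \Pi \mid |K| \geq n-B\}$. For the Delivered sets of $\textit{PDel}^B$, the defining condition forces $\textit{CDel}(r,j) = K_{\textit{CDel}}(r)$ with $|K_{\textit{CDel}}(r)| \geq n-B$ for every collection, round and process; hence every such Delivered set has cardinality at least $n-B$, giving one inclusion.

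For the reverse inclusion I would exhibit, for an arbitrary $K \subseteq \Pi$ with $|K| \geq n-B$, the constant collection $\textit{CDel}_K$ defined by $\textit{CDel}_K(r,j) = K$ for all $r > 0$ and $j \in \Pi$. Its kernel at every round is $\bigcap_{j \in \Pi} K = K$, so both clauses of the definition of $\textit{PDel}^B$ hold and $\textit{CDel}_K \in \textit{PDel}^B$; this witnesses $K$ as a Delivered set of the predicate. Combining the two inclusions yields $\textit{Nexts}_{f_{n-B}} = \textit{Nexts}_{f_{cfDom}}$, hence $f_{n-B} = f_{cfDom}$, and Theorem~\ref{domCfree} then gives that $f_{n-B}$ dominates all carefree strategies for $\textit{PDel}^B$ (its validity being established inside that theorem via Lemma~\ref{cfreeValid}).

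There is no genuine obstacle here: the argument is a bookkeeping identification of two strategies through their $\textit{Nexts}$ sets. The only point needing care is the reverse inclusion, where one must check that forcing every Delivered set equal to $K$ really makes the kernel equal to $K$ and not a strict subset -- this is where the particular kernel constraint in the definition of $\textit{PDel}^B$, as opposed to the weaker cardinality constraint of $\textit{PDel}^F$, is actually used.
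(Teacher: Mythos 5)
Your proof is correct and follows essentially the same route as the paper: both establish that $\textit{Nexts}_{f_{n-B}}$ coincides with the set of Delivered sets occurring in $\textit{PDel}^B$, so that $f_{n-B}$ is exactly the strategy $f_{cfDom}$ of Theorem~\ref{domCfree}, whose validity (via Lemma~\ref{cfreeValid}) and carefree domination then transfer directly. The only difference is that you make explicit the witness construction (the constant collection $\textit{CDel}_K$) for the reverse inclusion, which the paper asserts in one sentence; this is a welcome, not a divergent, addition.
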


    \begin{proof}
        The definition of \textit{PDel}$^B$ gives us
        $\{\textit{CDel}(r,j) \mid \textit{CDel} \in \textit{PDel} \land r > 0
        \land j \in \Pi \}
        = \{ S \in \mathcal{P}(\Pi) \mid |S| \geq n-B \} = Nexts_{f_{n-B}}$,
        because the only constrained on the Delivered sets are that they must be
        the same for all processes at a round and that they must have at least
        $n-B$ processes.
        We therefore conclude from Lemma~\ref{cfreeValid} that $f_{n-B}$ is
        valid for \textit{PDel}$^B$ and from Theorem~\ref{domCfree} that
        it dominates carefree strategy for the same predicate.
    \end{proof}

    \begin{theorem}[$f_{n-B}$ Dominates \textit{PDel}$^B$]
        $f_{n-B}$ dominates \textit{PDel}$^B$.
    \end{theorem}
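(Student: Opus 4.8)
The plan is to reduce this statement to the sufficient condition for carefree domination already established in Theorem~\ref{suffCfreeDom}, by showing that \textit{PDel}$^B$ is round-symmetric. Once that is done, Theorem~\ref{suffCfreeDom} guarantees that some carefree strategy dominates \textit{PDel}$^B$, and the only remaining task is to identify that strategy with $f_{n-B}$, which the preceding lemma and the uniqueness result make immediate.

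First I would verify the two conditions of round-symmetry. The total collection lies in \textit{PDel}$^B$: for $\textit{CDel}_{total}$ every kernel equals $\Pi$, so $\textit{CDel}_{total}(r,j) = \Pi = K_{\textit{CDel}_{total}}(r)$ and $|K_{\textit{CDel}_{total}}(r)| = n \geq n-B$. For symmetry up to a round, I take any $D$ occurring as a Delivered set of some collection in \textit{PDel}$^B$ -- by the definition of the predicate such a $D$ equals a kernel and hence satisfies $|D| \geq n-B$ -- together with any $r > 0$. I then build $\textit{CDel}$ by setting $\textit{CDel}(r',j) = \Pi$ for every $r' < r$ and $\textit{CDel}(r',j) = D$ for every $r' \geq r$, uniformly over $j$. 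This collection belongs to \textit{PDel}$^B$: at rounds $r' < r$ the kernel is $\Pi$ of size $n \geq n-B$; at rounds $r' \geq r$ all processes share $D$, so $K_{\textit{CDel}}(r') = D$, each Delivered set equals this kernel, and $|D| \geq n-B$. Both defining clauses of round-symmetry are thus satisfied.

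Next I would invoke Theorem~\ref{suffCfreeDom} to obtain a carefree strategy dominating \textit{PDel}$^B$, and then pin down which one it is. Because \textit{PDel}$^B$ contains the total collection, Theorem~\ref{uniqDomCfree} yields a \emph{unique} carefree strategy dominating all carefree strategies for \textit{PDel}$^B$, and the preceding lemma identifies this unique strategy as $f_{n-B}$. Since the dominating carefree strategy produced by Theorem~\ref{suffCfreeDom} is exactly this unique dominating carefree strategy, it must be $f_{n-B}$, so $f_{n-B}$ dominates \textit{PDel}$^B$.

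The whole argument is a short composition of earlier results, so there is no genuinely hard step; the only place requiring care is the verification of symmetry up to a round, and within it the check that the constructed collection really meets the kernel constraint of \textit{PDel}$^B$ at every round. That check hinges on the fact that assigning the same set $D$ to all processes forces the kernel at those rounds to equal $D$, which keeps each Delivered set equal to its kernel as \textit{PDel}$^B$ demands.
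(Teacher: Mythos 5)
Your proposal is correct and follows essentially the same route as the paper: both prove that \textit{PDel}$^B$ is round-symmetric via the same construction (total collection plus the collection equal to $\Pi$ before round $r$ and to $D$ from round $r$ on), then invoke Theorem~\ref{suffCfreeDom} and identify the dominating carefree strategy as $f_{n-B}$ via the preceding lemma. Your version is in fact slightly more careful than the paper's at the identification step, where you explicitly appeal to the uniqueness result of Theorem~\ref{uniqDomCfree}, which the paper leaves implicit.
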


    \begin{proof}
        We only need to prove that \textit{PDel}$^B$ is round-symmetric;
        the application of Theorem~\ref{suffCfreeDom} will then
        yield that \textit{PDel}$^B$ is dominated by a carefree strategy,
        thus dominated by the carefree strategy that dominates all
        carefree strategies.
        \begin{itemize}
            \item \textit{PDel}$^B$ contains the total collection, because
                it is the collection where every round kernel is $\Pi$.
            \item Let $D \in
                \{ \textit{CDel}(r,j) \mid \textit{CDel} \in \textit{PDel}^B
                \land r > 0 \land j \in \Pi\}$. By definition of \textit{PDel}$^B$,
                $D$ contains at least $n-B$ processes.
                Let $r > 0$.

                We then build $\textit{CDel}:
                (\mathbb{N}^* \times \Pi) \mapsto \mathcal{P}(\Pi)$ such that
                $\forall j \in \Pi:
                \left(
                \begin{array}{l@{~}l}
                    & \forall r' < r: \textit{CDel}(r',j) = \Pi\\
                    \land & \forall r' \geq r: \textit{CDel}(r,j) = D
                \end{array}
                \right)$.\\
                At each round, the Delivered sets contains at least
                $n-B$ processes because $D$ does, and all processes
                share the same Delivered set. We conclude that $\textit{CDel}
                \in \textit{PDel}^B$, and thus that the symmetry up to
                a round is satisfied.
       \end{itemize}
    \end{proof}

    \begin{theorem}[Heard-Of Characterization of \textit{PDel}$^B$]
        The Heard-Of predicate implemented by $f_{n-B}$ in \textit{PDel}$^B$
        is $CHO_{f_{n-B}}(\textit{PDel}^B) =
        \{ \textit{cho} \in (\mathbb{N}^* \times \Pi) \mapsto \mathcal{P}(\Pi)
        \mid \forall r > 0, \forall j \in \Pi: |\textit{cho}(r,j)| \geq n-B \}$.
    \end{theorem}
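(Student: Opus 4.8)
The plan is to prove this characterization by combining the two dominating-strategy theorems established for $\textit{PDel}^B$ with the general carefree Heard-Of characterization. Recall we have just shown that $\textit{PDel}^B$ is round-symmetric, that $f_{n-B}$ dominates all carefree strategies for $\textit{PDel}^B$, and that $f_{n-B}$ in fact dominates $\textit{PDel}^B$ outright (via Theorem~\ref{suffCfreeDom}). The key observation is that the Heard-Of predicate we must exhibit is precisely the set of collections whose every set is a valid Delivered set of $\textit{PDel}^B$, so the whole statement should reduce to an application of Theorem~\ref{cfreeHO}.

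Concretely, I would first record that $\textit{PDel}^B$ contains the total collection (already noted in the round-symmetry proof) and that $f_{n-B}$ is a valid carefree strategy for $\textit{PDel}^B$ (from the preceding lemma). This places us exactly in the hypotheses of Theorem~\ref{cfreeHO}, so for any Heard-Of collection $\textit{cho}$ over $\Pi$ we get the equivalence $\textit{cho} \in CHO_{f_{n-B}}(\textit{PDel}^B) \iff \forall r > 0, \forall j \in \Pi: \textit{cho}(r,j) \in \textit{Nexts}_{f_{n-B}}$. It then only remains to identify $\textit{Nexts}_{f_{n-B}}$ explicitly.

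The second step is that identification: by the definition of $f_{n-B}$ (namely $f_{n-F}$ with $F := B$), a local state $q$ is in $f_{n-B}$ exactly when $q$ has received at least $n-B$ messages from its current round, so $\textit{Nexts}_{f_{n-B}} = \{ S \in \mathcal{P}(\Pi) \mid |S| \geq n-B \}$ — this is already stated verbatim in the previous lemma's proof and can simply be invoked. Substituting this set into the equivalence turns the right-hand condition $\textit{cho}(r,j) \in \textit{Nexts}_{f_{n-B}}$ into $|\textit{cho}(r,j)| \geq n-B$, which is exactly the membership condition for the predicate claimed in the statement. Chaining the two equivalences yields the desired set equality.

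I do not anticipate a serious obstacle here; the proof is essentially a bookkeeping argument assembling results already proven. The one point requiring a little care is confirming that all hypotheses of Theorem~\ref{cfreeHO} are genuinely met — in particular that $f_{n-B}$ is \emph{valid} for $\textit{PDel}^B$ (not merely a carefree strategy), since Theorem~\ref{cfreeHO} demands a valid carefree strategy. This is discharged by the preceding lemma, which established validity via Lemma~\ref{cfreeValid}. With that in hand, the argument is a direct substitution and no case analysis or construction of auxiliary runs is needed.
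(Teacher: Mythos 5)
Your proposal is correct and follows essentially the same route as the paper's own proof: identify $\textit{Nexts}_{f_{n-B}}$ as $\{S \in \mathcal{P}(\Pi) \mid |S| \geq n-B\}$ and conclude by Theorem~\ref{cfreeHO}. The only difference is that you explicitly verify the hypotheses of Theorem~\ref{cfreeHO} (total collection in $\textit{PDel}^B$ and validity of $f_{n-B}$), which the paper leaves implicit since they were established in the preceding lemmas.
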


    \begin{proof}
        We know that $\textit{Nexts}_{f_{n-B}}= \{ cfree(q) \mid
        q \in Q \land |\{k \in \Pi \mid \langle q.round, k \rangle
        \in q.received\}| \geq n-B\} = \{S \in \mathcal{P}(\Pi) \mid
        |S| \geq n-B\}$. We conclude by application of
        Theorem~\ref{cfreeHO}.
    \end{proof}

\section{Beyond Carefree Strategies: Reactionary Strategies}
\label{sec:general}

    Sometimes, carefree strategies are not enough to capture the
    subtleties of a Delivered predicate. Take the one corresponding to
    at most $F$ initial crashes for example: to make the most of this
    predicate, a strategy should remember from which processes it received
    a message, since it knows this process did not crash.
    A class of strategies which allows this is the class of
    reactionary strategies: they depend on messages from
    current and past rounds, as well as the round number.
    The only part of the local state these
    strategies cannot take into account is the set of messages received
    from "future" rounds, a possibility due to asynchrony.

    \subsection{Definition and Expressiveness Results}

    \begin{definition}[Reactionary Strategy]
        Let $f$ be a strategy, and $\forall q \in Q$, let $reac(q) \triangleq
        \langle q.round, \{ \langle r, k \rangle \in q.received \mid
        r \leq q.round \} \rangle$.
        $f$ is a \textbf{reactionary strategy} $\triangleq
        \forall q, q' \in Q: reac(q) = reac(q') \implies (q \in f \iff q' \in f)$.
        We write $\textit{Nexts}^R_f \triangleq \{reac(q) \mid q \in f\}$ for the set
        of reactionary states in $f$. This uniquely defines $f$.
    \end{definition}

    Even if reactionary strategies are more complex, we can still prove the same
    kind of results as for carefree ones, namely about validity
    and the existence of a reactionary strategy dominating all
    reactionary strategies.

    \begin{lemma}[Validity of Reactionary]
        \label{reacValid}
        Let \textit{PDel} be a Delivered predicate and
        $f$ a reactionary strategy.
        Then, $f$ is valid for $PDel \iff
        \forall \textit{CDel} \in \textit{PDel}, \forall r > 0, \forall j \in \Pi:
        \langle r, \{ \langle r',k \rangle \mid r' \leq r
        \land k \in \textit{CDel}(r',j)\} \rangle \in \textit{Nexts}^R_f$.
    \end{lemma}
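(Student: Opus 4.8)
The plan is to prove this biconditional characterization of valid reactionary strategies by closely following the structure of the proof of Lemma~\ref{cfreeValid} (Validity of Carefree), since the reactionary case is a natural generalization where $cfree(q)$ is replaced by $reac(q)$. The key observation underpinning both directions is that $reac(q)$ is determined entirely by the round number and the messages received from rounds $r' \leq q.round$; in an earliest run, once a process reaches round $r$, all messages destined for it from rounds $\leq r$ in the Delivered collection have already been delivered (messages from earlier rounds were delivered in earlier iterations, and messages from round $r$ are delivered before the \textit{nexts} of iteration $r$). Thus the reactionary value $reac(q)$ of a process stuck at round $r$ stabilizes to exactly $\langle r, \{ \langle r',k \rangle \mid r' \leq r \land k \in \textit{CDel}(r',j)\} \rangle$, which is precisely the expression appearing in the statement.

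For the forward direction $(\Rightarrow)$, I would argue by contradiction: assume $f$ is valid but there exist $\textit{CDel} \in \textit{PDel}$, $r > 0$, and $j \in \Pi$ with $\langle r, \{\langle r',k\rangle \mid r' \leq r \land k \in \textit{CDel}(r',j)\}\rangle \notin \textit{Nexts}^R_f$. I would then take $t$ to be an earliest run of $f$ for \textit{CDel}, which is a run of $f$ by Lemma~\ref{trivialBis}, and split into the same two cases as in Lemma~\ref{cfreeValid}: either some process is blocked during one of the first $r-1$ iterations, or all processes reach round $r$. In the second case, once $j$ is at round $r$ with all its round-$\leq r$ messages delivered, every subsequent local state $q'$ of $j$ with $q'.round = r$ satisfies $reac(q') = \langle r, \{\langle r',k\rangle \mid r' \leq r \land k \in \textit{CDel}(r',j)\}\rangle \notin \textit{Nexts}^R_f$, so $q' \notin f$ and $j$ can never change round from $r$. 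Hence $t$ is invalid, contradicting validity.

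For the reverse direction $(\Leftarrow)$, I would also argue by contradiction following the template of Lemma~\ref{cfreeValid}: assume the right-hand side holds but some $t \in runs_f(\textit{PDel})$ is invalid, let $r$ be the smallest round at which some process $j$ is blocked forever, and pick $\textit{CDel} \in \textit{PDel}$ with $t \in runs(\textit{CDel})$. By minimality of $r$, all processes reach round $r$, and by definition of a run of \textit{CDel}, eventually every message from rounds $\leq r$ that $j$ will ever receive (for those rounds) is delivered; since $r$ is minimal, $j$ has already completed all rounds $< r$, so from some point on the reactionary part of $j$'s state equals $\langle r, \{\langle r',k\rangle \mid r' \leq r \land k \in \textit{CDel}(r',j)\}\rangle$, which lies in $\textit{Nexts}^R_f$ by hypothesis. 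Thus $f$ allows \textit{next}$_j$ from that point on, and the fairness of \textit{next} forces $j$ to eventually change round, contradicting that $j$ is blocked.

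The main obstacle, which distinguishes this from the carefree case, is justifying that the reactionary value genuinely stabilizes despite the presence of asynchronously delivered future-round messages. I must argue carefully that although messages from rounds $> r$ may still arrive while $j$ sits at round $r$, these do not affect $reac(q')$ because the filter $r' \leq q.round$ in the definition of $reac$ discards them; only messages from rounds $\leq r$ matter, and in both the earliest run (forward direction) and any run of \textit{CDel} (reverse direction) the set of such messages that $j$ receives is eventually exactly $\{\langle r',k\rangle \mid r' \leq r \land k \in \textit{CDel}(r',j)\}$. Establishing this stabilization rigorously — in particular that no relevant round-$\leq r$ message is still pending once $j$ is stuck at $r$ — is the crux of the argument.
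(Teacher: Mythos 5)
Your proposal is correct and takes essentially the same approach as the paper's proof: the forward direction builds an earliest run of $f$ for \textit{CDel} (justified by Lemma~\ref{trivialBis}) and uses the same two-case split to show it is invalid, while the reverse direction uses minimality of the blocked round, a witnessing \textit{CDel}, and fairness of \textit{next}. The stabilization issue you identify as the crux --- that late future-round deliveries cannot perturb $reac(q)$ because of the $r' \leq q.round$ filter --- is precisely the observation the paper's argument rests on as well.
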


    \begin{proof}
        $(\Rightarrow)$ Let $f$ be valid for \textit{PDel}.
        We show by contradiction that it satisfies the right-hand side of the
        above equivalence.
        \textbf{Assume} there is
        $\textit{CDel} \in \textit{PDel}, r > 0$ and $j \in \Pi$ such that
        $\langle r, \{ \langle r',k \rangle \mid r' \leq r
        \land k \in \textit{CDel}(r',j) \} \rangle \notin \textit{Nexts}^R_f$.
        Let $t$ be an earliest run of $f$ for
        \textit{CDel}.
        This is a run of $f$ by Lemma~\ref{trivialBis}.

        The sought contradiction is reached by proving that $t$ is invalid.
        To do so, we split according to two cases.
        \begin{itemize}
            \item During one of the first $r-1$ iterations of the
                $t$, there is some process which cannot change
                round. Let $r'$ be the smallest iteration where it happens,
                and $k$ be a process unable to change round at the $r'$-ith
                iteration.

                By minimality of $r'$, all processes arrive at round $r'$ in
                $t$; by definition of an earliest run, all messages for $k$
                from all rounds up to $r'$ are delivered before the \textit{next}
                for the iteration.
                Let $q$ the local state of $k$ at the
                first \textit{next} in the $r'$-ith iteration, and
                let $q'$ be any local state of $k$ afterward. 
                The above tells us that as long as $q'.round = q.round$,
                we have $reac(q) = reac( q')$ and thus $q' \notin f$.
                Therefore, $k$ can never change round while at round $r'$.

                We conclude that $t$ is invalid.
            \item For the first $r-1$ iterations, all processes change round.
                Thus, every one arrives at round $r$ in the $r-1$-ith iteration.
                By definition of
                an earliest run, all messages from rounds up to $r$ are delivered
                before any \textit{next} at round $r$. By hypothesis,
                $j$ cannot change round because its reactionary state --
                the combination of its rounds and the messages it received
                from past and current rounds -- is not
                in $\textit{Nexts}^R_f$.
                Let $q$ be the local state of $j$ at the
                first \textit{next} in the $r$-ith iteration, and
                let $q'$ be any local state of $j$ afterward. 
                The above tells us that as long as $q'.round = q.round$,
                we have $reac(q) = reac(q')$
                and thus $q' \notin f$. Therefore, $j$ can never
                change round while at round $r$.

                Here too, $t$ is invalid.
        \end{itemize}

        Either way, we reach a \textbf{contradiction} with the validity of $f$.

        $(\Leftarrow)$ Let \textit{PDel} and $f$ such that
        $\forall \textit{CDel} \in \textit{PDel},
        \langle r, \{ \langle r',k \rangle \mid r' \leq r
        \land k \in \textit{CDel}(r',j) \} \rangle \in \textit{Nexts}^R_f$.
        We show by contradiction that $f$ is valid.

        \textbf{Assume} the contrary: there is some
        $t \in \textit{runs}_f(\textit{PDel})$ which is invalid. Thus, there
        are some process blocked at a round forever in $t$. Let $r$ be the
        smallest such round, and $j$ be a process blocked at round $r$ in $t$.
        By minimality of $r$, all processes arrive at round $r$.
        By definition of a run of \textit{PDel}, there is a
        $\textit{CDel} \in \textit{PDel}$ such that $t$ is a run of \textit{CDel}.
        Thus, eventually all messages from all Delivered sets of $j$ up to round
        $r$ are delivered.

        From this point on, $f$ allows $j$ to change round by definition, and
        the fairness of \textit{next} imposes that $j$ does at some point.
        We conclude that $j$ is not blocked at round $r$ in $t$,
        which \textbf{contradicts} the hypothesis.
    \end{proof}

    There is also always a reactionary strategy dominating all reactionary
    strategies for a given Delivered predicate.
        Analogously to the
        carefree case, this stems from the fact that
        the strategy which only waits for the prefixes
        of Delivered collections in the predicate waits for more messages
        than any other valid reactionary strategy, by Lemma~\ref{reacValid}.

    \begin{theorem}[Always a Dominating Reactionary Strategy]
        \label{reacDom}
        Let \textit{PDel} a Delivered predicate and let $f_{rcDom}$ be the
        reactionary strategy defined by\\
        $\textit{Nexts}^R_{rcDom} =
        \left\{ \langle r, \{ \langle r',k \rangle \mid r' \leq r
        \land k \in \textit{CDel}(r',j)\} \rangle \mid
        r > 0 \land \textit{CDel} \in \textit{PDel}
        \right\}$.

        $f_{rcDom}$ dominates all reactionary strategies for \textit{PDel}.
    \end{theorem}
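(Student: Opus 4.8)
The plan is to mirror the proof of Theorem~\ref{domCfree}, replacing the carefree machinery with its reactionary counterpart. The argument splits into two phases: first establishing that $f_{rcDom}$ is itself a valid strategy for \textit{PDel}, and then showing it is a greatest element of $\prec_{\textit{PDel}}$ among the valid reactionary strategies.

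First I would check validity. By Lemma~\ref{reacValid}, a reactionary strategy is valid for \textit{PDel} exactly when, for every $\textit{CDel} \in \textit{PDel}$, every $r > 0$ and every $j \in \Pi$, the reactionary state $\langle r, \{\langle r',k\rangle \mid r' \leq r \land k \in \textit{CDel}(r',j)\}\rangle$ belongs to $\textit{Nexts}^R_f$. For $f_{rcDom}$, its $\textit{Nexts}^R$ is by definition exactly the set of all such reactionary states arising from prefixes of collections in \textit{PDel}, so the required membership holds trivially and $f_{rcDom}$ is valid.

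Next I would prove domination. Fix any valid reactionary strategy $f$ for \textit{PDel}. Applying the forward direction of Lemma~\ref{reacValid} to $f$ forces $\textit{Nexts}^R_f$ to contain every reactionary state coming from a prefix of a collection in \textit{PDel} --- but these are precisely the elements of $\textit{Nexts}^R_{rcDom}$, so $\textit{Nexts}^R_{rcDom} \subseteq \textit{Nexts}^R_f$. Since a reactionary strategy is uniquely determined by its $\textit{Nexts}^R$ set through $q \in f \iff reac(q) \in \textit{Nexts}^R_f$, this inclusion lifts to $f_{rcDom} \subseteq f$: any local state that allows a change of round under $f_{rcDom}$ also allows it under $f$.

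Finally I would convert this set inclusion of strategies into an inclusion of generated predicates, showing $runs_{f_{rcDom}}(\textit{PDel}) \subseteq runs_f(\textit{PDel})$ and hence $PHO_{f_{rcDom}}(\textit{PDel}) \subseteq PHO_f(\textit{PDel})$, which is exactly $f \prec_{\textit{PDel}} f_{rcDom}$. The step requiring care --- and the only real obstacle --- is the run inclusion: a priori, $f_{rcDom} \subseteq f$ only directly preserves the \emph{Next only if allowed} condition and does not obviously preserve infinite fairness, since a state may lie in $f$ but not in $f_{rcDom}$. The resolution is that every $t \in runs_{f_{rcDom}}(\textit{PDel})$ is valid (because $f_{rcDom}$ is valid), so every process already changes round infinitely often; the premise of the infinite-fairness clause is therefore never met and that clause holds vacuously for $f$ as well, while finite fairness is irrelevant for an infinite run. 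Thus each such $t$ is also generated by $f$, giving the run inclusion and hence the predicate inclusion that concludes the proof.
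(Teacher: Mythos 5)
Your proof is correct and follows essentially the same route as the paper's: validity of $f_{rcDom}$ by Lemma~\ref{reacValid}, then the forward direction of that lemma to obtain $\textit{Nexts}^R_{rcDom} \subseteq \textit{Nexts}^R_f$ for any valid reactionary $f$, hence $f_{rcDom} \subseteq f$, hence $runs_{f_{rcDom}}(\textit{PDel}) \subseteq runs_f(\textit{PDel})$ and the inclusion $PHO_{f_{rcDom}}(\textit{PDel}) \subseteq PHO_f(\textit{PDel})$. Your explicit handling of the fairness clauses --- using the validity of $f_{rcDom}$ to make the infinite-fairness premise vacuous so that the strategy inclusion really does lift to a run inclusion --- is a detail the paper's proof asserts without justification, so your version is if anything slightly more complete.
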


    \begin{proof}
        First, $f_{rcDom}$ is valid for \textit{PDel} by application of
        Lemma~\ref{reacValid}.

        As for domination, we also deduce from Lemma~\ref{reacValid} that
        $\forall f$ a valid reactionary strategy for \textit{PDel},
        $\textit{Nexts}^R_{f_{rcDom}} \subseteq \textit{Nexts}^R_f$ and thus
        $f_{rcDom} \subseteq f$. Therefore,
        $\forall q \in Q:
        q \in f_{rcDom} \implies q \in f$. This gives us
        $runs_{f_{rcDom}}(\textit{PDel}) \subseteq runs_f(\textit{PDel})$,
        and we conclude $PHO_{f_{rcDom}}(\textit{PDel})
        \subseteq PHO_f(\textit{PDel})$.
        We conclude that $f_{rcDom}$ dominates all valid carefree strategies
        for \textit{PDel}.
    \end{proof}

    \subsection{Example Dominated by Reactionary Strategy}

    To show the usefulness of reactionary strategies, we study the Delivered
    predicate corresponding to reliable communication and at most $F$
    initial crashes: either processes crash initially and no message
    of theirs is ever delivered, or they do not and all their messages will
    be delivered eventually.

    \begin{definition}[\textit{PDel}$_{ini}^F$]
        The Delivered predicate \textit{PDel}$_{ini}^F$ for
        at most $F$ initial crashes is:
        $\{ \textit{CDel} \in (\mathbb{N}^* \times \Pi) \mapsto \mathcal{P}(\Pi)
        \mid \exists \Sigma \subseteq \Pi: |\Sigma| \geq n-F \land
        \forall r > 0, \forall j \in \Pi: \textit{CDel}(r,j) = \Sigma \}$.
    \end{definition}

    As we mentioned above, it is possible to take advantage of the past
    by waiting in the current round for messages from processes which
    sent a message in a past round.

    \begin{definition}[Past complete strategy]
        The \textbf{past-complete strategy} $f_{pc}$ is defined by
        $\textit{Nexts}^R_{f_{pc}} =
        \{ \langle r, [1,r] \times \Sigma \rangle \mid r > 0 \land
        \Sigma \subseteq \Pi \land |\Sigma| \geq n-F \}$.
    \end{definition}

    \begin{lemma}[$f_{pc}$ Reactionary Dominates \textit{PDel}$_{ini}^F$]
        \label{fpcReacDom}
        $f_{pc}$ dominates all reactionary strategies for \textit{PDel}$_{ini}^F$.
    \end{lemma}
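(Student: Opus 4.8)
The plan is to show that $f_{pc}$ dominates all reactionary strategies for $\textit{PDel}_{ini}^F$. By Theorem~\ref{reacDom}, there is a reactionary strategy $f_{rcDom}$ that dominates all reactionary strategies. So my strategy is to show that $f_{pc}$ coincides with $f_{rcDom}$, i.e. that $\textit{Nexts}^R_{f_{pc}} = \textit{Nexts}^R_{f_{rcDom}}$, which immediately gives the result. This reduces the problem to a purely set-theoretic computation: I need to compute the set $\textit{Nexts}^R_{f_{rcDom}}$ from the definition of $\textit{PDel}_{ini}^F$ and check it equals $\{ \langle r, [1,r] \times \Sigma \rangle \mid r > 0 \land \Sigma \subseteq \Pi \land |\Sigma| \geq n-F \}$.

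First I would unfold the definition of $\textit{Nexts}^R_{f_{rcDom}}$ from Theorem~\ref{reacDom}, which is the set of all reactionary states $\langle r, \{ \langle r',k \rangle \mid r' \leq r \land k \in \textit{CDel}(r',j)\} \rangle$ ranging over $r > 0$, $j \in \Pi$, and $\textit{CDel} \in \textit{PDel}_{ini}^F$. By definition of $\textit{PDel}_{ini}^F$, every such $\textit{CDel}$ has a fixed $\Sigma \subseteq \Pi$ with $|\Sigma| \geq n-F$ and $\textit{CDel}(r',j) = \Sigma$ for all $r', j$. Substituting this into the reactionary state, the inner set becomes $\{ \langle r', k \rangle \mid r' \leq r \land k \in \Sigma \} = [1,r] \times \Sigma$, independent of $j$. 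Hence the reactionary state is exactly $\langle r, [1,r] \times \Sigma \rangle$.

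The two inclusions then follow directly. For $\textit{Nexts}^R_{f_{rcDom}} \subseteq \textit{Nexts}^R_{f_{pc}}$, every reactionary state of $f_{rcDom}$ has the form $\langle r, [1,r] \times \Sigma \rangle$ with $|\Sigma| \geq n-F$, which is precisely the shape defining $\textit{Nexts}^R_{f_{pc}}$. For the reverse inclusion, given any $r > 0$ and $\Sigma \subseteq \Pi$ with $|\Sigma| \geq n-F$, the collection $\textit{CDel}$ defined by $\textit{CDel}(r',j) = \Sigma$ everywhere belongs to $\textit{PDel}_{ini}^F$, and it witnesses $\langle r, [1,r] \times \Sigma \rangle \in \textit{Nexts}^R_{f_{rcDom}}$. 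This gives $\textit{Nexts}^R_{f_{pc}} = \textit{Nexts}^R_{f_{rcDom}}$, so $f_{pc} = f_{rcDom}$ (both reactionary strategies are uniquely determined by their $\textit{Nexts}^R$ set), and the domination claim follows from Theorem~\ref{reacDom}.

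I do not expect any serious obstacle here: the argument is a matching of two explicitly given $\textit{Nexts}^R$ descriptions, with the main content being the observation that the product structure $[1,r] \times \Sigma$ emerges because the Delivered sets of $\textit{PDel}_{ini}^F$ are constant across rounds and processes. The one point requiring a little care is the validity of $f_{pc}$, which is implicitly needed to even speak of domination; this is covered by Lemma~\ref{reacValid} once the $\textit{Nexts}^R$ identification is established, since $f_{rcDom}$ is valid by Theorem~\ref{reacDom}.
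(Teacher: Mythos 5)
Your proposal is correct and is essentially identical to the paper's own proof: both identify $\textit{Nexts}^R_{f_{pc}}$ with $\textit{Nexts}^R_{f_{rcDom}}$ from Theorem~\ref{reacDom}, using the fact that every collection in $\textit{PDel}_{ini}^F$ has a constant Delivered set $\Sigma$ with $|\Sigma| \geq n-F$, so the reactionary states collapse to $\langle r, [1,r] \times \Sigma \rangle$. You merely spell out the two inclusions and the validity point that the paper leaves implicit.
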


    \begin{proof}
        It follows from Theorem~\ref{reacDom}, because $Nexts^R_{f_{pc}}
        = \{ \langle r, \{ \langle r', k \rangle \mid r' \leq r
        \land k \in \Sigma\} \rangle \mid r > 0 \land
        \Sigma \subseteq \Pi \land |\Sigma| \geq n-F \}
        = \{\langle r, \{ \langle r', k \rangle \mid
        r' \leq r \land k \in \textit{CDel}(r',j)\} \rangle \mid
        r > 0 \land \textit{CDel} \in \textit{PDel}_{ini}^F \}$. The last
        equality follows from the fact that Delivered sets are always the
        same for all Delivered collection in \textit{PDel}$_{ini}^F$.
    \end{proof}

    Reactionary strategies can generate more complex and involved Heard-Of
    predicates than carefree ones. The one generated by $f_{pc}$ for
    \textit{PDel}$_{ini}^F$ is a good example: it ensures that all Heard-Of
    sets contains at least $n-F$ processes, and it also forces Heard-Of sets
    for a process to be non-decreasing, and for all rounds to eventually
    converge to the same Heard-Of set. This follows from the fact that a
    process can detect an absence of crash: if one message is received from
    a process, it will always be safe to wait for messages from this process
    as it did not crash and never will.
    Since there is no loss of message
    for non crashed processes, every one eventually receives a message from
    every process sending messages, and thus the Heard-Of sets converge.

    \begin{theorem}[Heard-Of Characterization of \textit{PDel}$_{ini}^F$]
        \label{pcCharac}
        $PHO_{f_{pc}}(\textit{PDel}_{ini}^F) =\\
        \left\{ \textit{cho} \in (\mathbb{N}^* \times \Pi) \mapsto \mathcal{P}(\Pi)
        ~\middle|~
        \left(
        \begin{array}{ll}
            & \forall r > 0, \forall j \in \Pi:
                \left(
                \begin{array}{ll}
                    & |\textit{cho}(r,j)| \geq n-F\\
                    \land & \textit{cho}(r,j) \subseteq \textit{cho}(r+1,j)\\
                \end{array}
                \right)\\
            \land & \exists \Sigma_0 \subseteq \Pi, \exists r_0 > 0,
            \forall r \geq r_0, \forall j \in \Pi: \textit{cho}(r,j) = \Sigma_0\\
        \end{array}
        \right)
        \right\}$
    \end{theorem}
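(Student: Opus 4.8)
The plan is to prove the set equality by double inclusion, using the characterization of reactionary Heard-Of predicates (which would play a role analogous to Theorem~\ref{cfreeHO} for carefree strategies) together with the explicit form of $\textit{Nexts}^R_{f_{pc}}$. Let me sketch both directions.

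For the $\subseteq$ direction, I would take any $\textit{cho} \in PHO_{f_{pc}}(\textit{PDel}_{ini}^F)$ and a run $t \in runs_{f_{pc}}(\textit{PDel}_{ini}^F)$ generating it, with $\textit{CDel} \in \textit{PDel}_{ini}^F$ the underlying Delivered collection and $\Sigma$ its associated set of non-crashed processes ($|\Sigma| \geq n-F$). First, whenever a process $j$ changes round at round $r$, its reactionary state must lie in $\textit{Nexts}^R_{f_{pc}}$, which by definition forces the messages it received from round $r$ to be a set of size at least $n-F$; hence $|\textit{cho}(r,j)| \geq n-F$. For monotonicity $\textit{cho}(r,j) \subseteq \textit{cho}(r+1,j)$, I would argue that any sender $k$ counted in $\textit{cho}(r,j)$ is in $\Sigma$ (it delivered a message at round $r$, so it did not crash initially), and since $f_{pc}$ only allows $j$ to change round when it has received \emph{all} of $\Sigma$ up to its current round, $j$ cannot leave round $r+1$ without having received $k$'s round-$(r+1)$ message — giving $k \in \textit{cho}(r+1,j)$. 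For the eventual convergence to a common $\Sigma_0$, the key observation is that $f_{pc}$ forces each process to eventually receive every message from every process in $\Sigma$; since no message from $\Sigma$ is ever lost and runs are valid (infinitely many rounds), from some round $r_0$ on every process has received all of $\Sigma$'s messages on time, so $\textit{cho}(r,j) = \Sigma =: \Sigma_0$ for all $r \geq r_0$.

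For the $\supseteq$ direction, I would take a $\textit{cho}$ satisfying the three right-hand conditions, set $\Sigma_0$ and $r_0$ as provided, and construct a witnessing run. The natural candidate is a \emph{standard run} of $\textit{cho}$, which is a run of $\textit{PDel}_{ini}^F$ by Lemma~\ref{trivial} provided $\textit{PDel}_{ini}^F$ contains the total collection; this holds by taking $\Sigma = \Pi$. The remaining work is to check that this standard run is in fact a run of $f_{pc}$: at each round $r$ and each process $j$, the reactionary state must belong to $\textit{Nexts}^R_{f_{pc}}$, i.e.\ $j$ must have received on time, from every past and current round $r' \leq r$, messages from some common set of size $\geq n-F$. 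Here I expect to need to \emph{adapt} the standard run rather than use it verbatim, because the generic standard run delivers all late messages exactly one round late, which need not produce reactionary states of the required ``complete rectangle'' shape $[1,r] \times \Sigma$. I would instead design the delivery schedule so that the set of on-time senders received by $j$ is non-decreasing and coincides with $\textit{cho}(r,j)$ — the monotonicity hypothesis $\textit{cho}(r,j) \subseteq \textit{cho}(r+1,j)$ is exactly what guarantees that once a process is ``seen on time'' it keeps being seen on time, so that $j$'s accumulated on-time senders from rounds $\leq r$ form precisely $\textit{cho}(r,j)$, a set of size $\geq n-F$, landing the reactionary state in $\textit{Nexts}^R_{f_{pc}}$.

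The main obstacle I anticipate is the $\supseteq$ direction, specifically matching the rigid ``past-complete'' shape of $\textit{Nexts}^R_{f_{pc}}$ — namely $\langle r, [1,r]\times\Sigma\rangle$ with a single $\Sigma$ covering \emph{all} rounds up to $r$ — against the more permissive Heard-Of collection, whose sets $\textit{cho}(r',j)$ may differ across rounds $r' \leq r$ while still being nested. The reactionary state of $j$ at round $r$ records on-time senders round-by-round, so I must ensure that the on-time deliveries are arranged so that $j$'s received set from every round $r' \leq r$ is a \emph{superset} projecting down to at least $n-F$ common senders; the non-decreasing hypothesis is precisely the structural fact that makes this possible, and verifying that the constructed run both respects the delivery constraints of a run and realizes $\textit{cho}$ as its Heard-Of collection is where the care is needed. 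I would handle this by an explicit inductive construction of the delivery order, deferring messages from senders not yet in $\textit{cho}(r,j)$ and delivering on time exactly those in $\textit{cho}(r,j)$, then verifying the three run-of-$f_{pc}$ conditions.
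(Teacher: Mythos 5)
Your overall skeleton --- double inclusion, with the $\supseteq$ direction handled by a purpose-built run that delivers on time exactly the sets $\textit{cho}(r,j)$ and completes the past of each Heard-Of set, in lockstep --- is the same as the paper's. But two of your justifications have genuine gaps. In $\subseteq$, the claim that ``$f_{pc}$ only allows $j$ to change round when it has received \emph{all} of $\Sigma$ up to its current round'' is false: $f_{pc}$ is defined with no reference to the underlying Delivered collection, and $\textit{Nexts}^R_{f_{pc}}$ accepts \emph{any} rectangle $[1,r]\times\Sigma'$ with $|\Sigma'| \geq n-F$; when $|\Sigma| > n-F$, this $\Sigma'$ can be a proper subset of $\Sigma$, so nothing forces $j$ to wait for all of $\Sigma$. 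The conclusions you draw from this claim are true, but need different arguments. For monotonicity: received messages persist, so $k \in \textit{cho}(r,j)$ means $\langle r,k\rangle$ is in $j$'s received set when $j$ later leaves round $r+1$, and the rectangle shape of the reactionary state at that moment forces $\langle r+1,k\rangle$ to be there too, i.e.\ $k \in \textit{cho}(r+1,j)$. For convergence: eventual delivery of every $\Sigma$-message comes from the definition of $runs(\textit{CDel})$, not from $f_{pc}$; what $f_{pc}$ gives you is that \emph{once} $j$ has received at least one message from each member of $\Sigma$ --- which happens at some finite point, e.g.\ once the last round-$1$ message is delivered --- every later rectangle must equal $[1,r]\times\Sigma$, hence $\textit{cho}(r,j) = \Sigma$ from the next round on. That finite point is how the paper anchors $r_0$.

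More seriously, in $\supseteq$ you are left with no argument that the adapted run is a run of \textit{PDel}$_{ini}^F$ at all. Lemma~\ref{trivial} applies to the \emph{standard} run and the \emph{total} collection, but in your adapted run the messages of senders outside $\bigcup_{r>0}\textit{cho}(r,j) = \Sigma_0$ are deferred forever, i.e.\ never delivered, so the run violates the iff-condition defining $runs(\textit{CDel}_{total})$ and is not a run of the total collection. (Nor can you deliver those messages ``very late'': after one such delivery, the rectangle condition would force that sender into every later Heard-Of set of the receiver, contradicting $\textit{cho}(r,j) = \Sigma_0$ for $r \geq r_0$.) You must instead exhibit the witnessing collection explicitly, namely $\textit{CDel}(r,j) = \Sigma_0$ for all $r,j$ --- which lies in \textit{PDel}$_{ini}^F$ since $|\Sigma_0| \geq n-F$ --- and verify the iff in both directions: every message delivered in the constructed run has its sender in $\Sigma_0$ (monotonicity plus convergence give $\textit{cho}(r,j) \subseteq \Sigma_0$ for all $r$), and conversely every message of a $\Sigma_0$-sender is eventually delivered because, by the convergence hypothesis, that sender eventually enters $\textit{cho}(r,j)$, at which point your past-completion rule delivers all its pending messages. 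This verification is precisely where the third conjunct of the right-hand side gets used, and it is the step the paper's proof spends its second half on; as written, your proposal omits it.
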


    \begin{proof}
        First, we show $\subseteq$.
        Let $\textit{cho} \in PHO_{f_{pc}}(\textit{PDel}_{ini}^F)$
        and $t \in runs_{f_{pc}}(\textit{PDel}_{ini}^F)$ a run of $f_{pc}$
        implementing \textit{cho}.
        Then, by definition of $f_{pc}$, processes change
        round only when they have received at least $n-F$ messages from the current
        round and they have completed their past, which implies that
        $\forall r \in \mathbb{N}^*, \forall j \in \Pi :
        \left(
        \begin{array}{ll}
            & |\textit{cho}(r,j)| \geq n-F\\
            \land & \textit{cho}(r,j) \subseteq \textit{cho}(r+1,j)\\
        \end{array}
        \right)$.

        As for the last part of the Heard-Of predicate, notice that
        by definition of runs, there is a point in $t$ where all messages
        from round 1 have been delivered. From this point on, all processes 
        wait for all messages of processes which send them. Thus, there is a
        round -- the maximal round of a process when the last message from
        round 1 is delivered in $t$ -- from which the rounds are space-time
        uniform.

        Then, we show $\supseteq$.
        Let \textit{cho} a Heard-Of collection over $\Pi$ in the set on the right.
        We now build $t$, a special run of \textit{cho} in iterations:
        at each iteration, we deliver messages from Heard-Of sets and messages
        to complete the past of these sets then change round for all processes.
        That is, we deliver all messages
        from past rounds sent by the processes in the Heard-Of sets which had
        not yet been delivered. The initial state is also
        the usual one.

        It is a run because it satisfies all four constraints:
        \begin{itemize}
            \item $t$ has the right initial state by definition.
            \item We only gave the transitions and initial state, so
                the states satisfy the transition function.
            \item Every message is delivered in a round greater or equal to
                the one it was sent in.
            \item Messages are delivered only once.
        \end{itemize}
        It is also a run of $f_{pc}$ because each Heard-Of set contains
        at least $n-F$ processes by hypothesis and we complete the past of each
        process before playing any \textit{next}. We still have to show
        that it is a run of a Delivered collection in \textit{PDel}$_{ini}^F$.

        By hypothesis, there is a round $r_0$ from which the Heard-Of collection
        is space-time uniform -- all processes share the same Heard-Of set forever.
        Let $\Sigma_0$ this Heard-Of set. We then build \textit{CDel}
        such that $\forall r > 0, \forall j \in \Pi:
        \textit{CDel}(r,j) = \Sigma_0$. Because $\forall r > 0, \forall j \in \Pi:
        \textit{cho}(r+1,j) \supseteq \textit{cho}(r,j)$, and because we only
        ever deliver messages from processes in some Heard-Of sets, we know
        every message delivered in $t$ is in \textit{CDel}. On the other
        hand, all messages from \textit{CDel} are delivered eventually:
        \begin{itemize}
            \item If a message is in a Heard-Of set, it is delivered in the
                corresponding round.
            \item If it is not in a Heard-Of set, it is delivered at most
                in the first round after its sending round where the Heard-Of
                set contains this process. Such a round must exist by hypothesis,
                because only processes from $\Sigma_0$ get their messages delivered,
                and eventually all processes have $\Sigma_0$ as their Heard-Of set.
        \end{itemize}
        We conclude that $t \in runs_{f_{pc}}(\textit{PDel}_{ini}^F)$, and thus
        that $\textit{cho} \in PHO_{f_{pc}}(\textit{PDel}_{ini}^F)$.
    \end{proof}

    As for $f_{pc}$ dominating \textit{PDel}$_{ini}^F$, the argument is
    quite similar to the one for $f_{n-F}$ dominating \textit{PDel}$^F$,
    with a more subtle manipulation of Delivered collection because
    we need to take the past into account.

    \begin{theorem}[$f_{pc}$ Dominates \textit{PDel}$_{ini}^F$]
        $f_{pc}$ dominates \textit{PDel}$_{ini}^F$.
    \end{theorem}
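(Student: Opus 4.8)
The plan is to prove domination by showing that any valid reactionary strategy $f'$ is dominated by $f_{pc}$. Following the template established for $f_{n-F}$ dominating $\textit{PDel}^F$ (Theorem~\ref{fnfDom}), I would take an arbitrary $\textit{cho} \in PHO_{f_{pc}}(\textit{PDel}_{ini}^F)$, build a standard run $t$ of $\textit{cho}$ — which is a run of $\textit{PDel}_{ini}^F$ by Lemma~\ref{trivial} since the predicate contains the total collection — and then prove $t$ is also a run of $f'$. Proving this by contradiction, I would assume some process $j$ is forbidden by $f'$ from performing a \textit{next} at a smallest round $r$; since \textit{next} transitions fire infinitely often in a standard run, the only failure mode is this forbidden \textit{next}.

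The key obstacle, and where this proof departs from the carefree case, is that a reactionary strategy inspects the entire history of messages received from rounds $\leq r$, not merely the current-round set. So when I construct a blocking Delivered collection to derive the contradiction, I cannot simply make the blocked processes silent from round $r$ onward as in the $f_{n-F}$ proof; I must replicate in $t_{block}$ the exact past that $j$ witnessed in $t$. The characterization of $PHO_{f_{pc}}(\textit{PDel}_{ini}^F)$ from Theorem~\ref{pcCharac} is what makes this possible: it tells me the Heard-Of sets are non-decreasing, contain at least $n-F$ processes, and converge to some $\Sigma_0$ from some round $r_0$. I would use these facts to identify the set of processes $j$ has "accepted" by round $r$ and argue that this set has size at least $n-F$ and is consistent with some $\Sigma \subseteq \Pi$ with $|\Sigma| \geq n-F$.

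Concretely, I would let $\Sigma$ be the set of processes from which $j$ received any message (from any round $\leq r$) before the forbidden \textit{next} in $t$, together with whatever is needed to reach size $n-F$, and define $\textit{CDel}_{block}$ by $\textit{CDel}_{block}(r',k) = \Sigma$ for all $r' > 0$ and all $k \in \Pi$. By the definition of $\textit{PDel}_{ini}^F$ this is a valid Delivered collection precisely because $|\Sigma| \geq n-F$ and the set is uniform across rounds and processes. I would then take $t_{block}$ to be an earliest run of $f'$ for $\textit{CDel}_{block}$, which is a run of $f'$ by Lemma~\ref{trivialBis}. The symmetry of $\textit{CDel}_{block}$ across processes is the crucial leverage: in an earliest run every process receives the same messages at each iteration, so they all reach the same local state simultaneously.

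The argument then splits into the same two cases as in Theorem~\ref{fnfDom}. Either some process is blocked during one of the first $r-1$ iterations, in which case symmetry forces all processes to be blocked at that round and $t_{block}$ is invalid; or all processes reach round $r$, at which point — by the earliest-run discipline of delivering everything before any \textit{next} and by the matching of the past of $j$ — every process sits in exactly the reactionary state that $j$ occupied when it was forbidden to advance in $t$. The hard part is verifying that the reactionary states genuinely coincide: I must check that the messages delivered in $t_{block}$ from rounds $\leq r$ reproduce $reac(q)$ for the blocking state $q$ of $j$, which relies on the uniformity of $\textit{CDel}_{block}$ and the fact that $j$'s accepted past was contained in $\Sigma$. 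Once this is established, since $f'$ forbade $j$ from advancing in that state, it forbids every process in $t_{block}$, so all are blocked at round $r$ and $t_{block}$ is invalid. Either branch contradicts the validity of $f'$, completing the proof that $f_{pc}$ dominates $\textit{PDel}_{ini}^F$.
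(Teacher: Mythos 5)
Your plan founders on the choice of the run $t$. You take $t$ to be a \emph{standard} run of \textit{cho}, but in a standard run every message is eventually delivered, with late messages arriving exactly one round after they were sent. Hence at the forbidden \textit{next} at a round $r \geq 2$, process $j$ has received messages from \emph{all} $n$ processes for every round $< r$, so your set $\Sigma$ (the senders of messages $j$ has received) is all of $\Pi$, and your blocking collection, which assigns $\Sigma$ uniformly, is just the total collection. In an earliest run of $f'$ for that collection, every process reaches round $r$ having received all $n$ messages of round $r$ before any \textit{next} --- a state strictly richer than $j$'s blocking state in $t$, where the round-$r$ messages received are only those from $\textit{cho}(r,j)$. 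The state coincidence that you yourself flag as ``the hard part'' therefore fails, and no contradiction with the validity of $f'$ follows: $f'$ may well permit the change of round in the richer state. The obstruction is structural: every collection in $\textit{PDel}_{ini}^F$ is uniform over rounds and processes, so no legal blocking collection can simultaneously reproduce a past containing all of $\Pi$ and a current round restricted to $\textit{cho}(r,j)$. The paper avoids this by \emph{not} using a standard run: it builds a special run of \textit{cho} in which, at each round, only the messages from the current Heard-Of sets are delivered, together with the older messages of those same senders (``completing the past''). Since Theorem~\ref{pcCharac} makes the Heard-Of sets non-decreasing, $j$'s blocking state in that run is exactly ``all messages of rounds $\leq r$ from $\textit{cho}(r,j)$'', which the uniform collection $\textit{CDel}_{block}(r',k) = \textit{cho}(r,j)$ replicates exactly --- no padding needed, since $|\textit{cho}(r,j)| \geq n-F$ by Theorem~\ref{pcCharac}.

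A second, independent defect: you announce that you will show $f_{pc}$ dominates every valid \emph{reactionary} strategy $f'$. That is not the statement. A dominating strategy is a greatest element of $\prec_{\textit{PDel}_{ini}^F}$ over \emph{all} valid strategies, including those that consult future-round messages (the example $\textit{PDel}_{lost}^1$ in the paper shows such strategies can genuinely matter). What you describe is essentially Lemma~\ref{fpcReacDom}, which the paper already obtains from Theorem~\ref{reacDom}. The paper's proof takes $f$ arbitrary, and this costs nothing: in both the special run and the earliest run, no future-round messages are delivered before the relevant \textit{next}, so matching the sets of received messages matches the full local states, not merely their reactionary projections.
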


    \begin{proof}
        Let $f$ be a valid strategy for \textit{PDel}$_{ini}^F$; we now prove
        that $f \prec_{\textit{PDel}_{ini}^F} f_{pc}$, that is
        $PHO_{f_{pc}}(\textit{PDel}_{ini}^F) \subseteq PHO_f(\textit{PDel}_{ini}^F)$.
        Thus, let $\textit{cho} \in PHO_{f_{pc}}(\textit{PDel}_{ini}^F)$.
        We now build $t$, a special run of \textit{cho}:
        at each round, we deliver messages from Heard-Of sets and messages
        to complete the past of these sets, then all processes change round.
        That is, we deliver all messages
        from past rounds sent by processes in the current Heard-Of sets which had
        not yet been delivered. The initial state is also
        the usual one.

        It is a run because it satisfies all four constraints:
        \begin{itemize}
            \item $t$ has the right initial state by definition.
            \item We only gave the transitions and initial state, so
                the states satisfy the transition function.
            \item Every message is delivered in a round greater or equal to
                the one it was sent in.
            \item Messages are delivered only once.
        \end{itemize}

        We want to show that $t$ is a run of $f$; we do so
        by contradiction.
        \textbf{Assume} it is not:
        because $t$ is infinite, the problem
        is either the infinite fairness of \textit{next} or a \textit{next} done
        when $f$ does not allow it. Each  \textit{next} transition being played
        an infinite number of times, the only possibility left is
        the second one: some \textit{next} transition in $t$ is done while
        $f$ does not allow the corresponding process to change round.
        There thus exists a smallest $r$ such that some process $j$ is not allowed
        by $f$ to change round when \textit{next}$_j$ is played at round $r$
        in $t$. At this point, we know that the past of $j$ is complete
        and that it received all messages from \textit{cho}$(r,j)$. By
        Theorem~\ref{pcCharac}, we know that $\forall r' < r: \textit{cho}(r',j)
        \subseteq \textit{cho}(r,j)$. Thus, when we complete the past
        of $j$, we deliver all messages at each round $\leq r$ from the processes
        in $\textit{cho}(r,j)$.

        We now build a Delivered collection $\textit{CDel}_{block}$ such that
        $\forall r' > 0, \forall k \in \Pi:
        \textit{CDel}_{block}(r',k) = \textit{cho}(r,j)$.
        This is in $\textit{PDel}_{ini}^F$ because $\textit{cho}(r,j)$ contains
        at least $n-F$ processes by Theorem~\ref{pcCharac}.
        Finally, we build $t_{block}$, an earliest run of $f$
        for \textit{CDel}$_{block}$.
        By Lemma~\ref{trivialBis}, we know $t_{block}$ is a run of $f$.
        We then have two possibilities.
        \begin{itemize}
            \item During one of the first $r-1$ iterations of
                $t_{block}$, there is some process which cannot change
                round. Let $r'$ be the smallest iteration where it happens,
                and $k$ be a process unable to change round at the $r'$-ith
                iteration.

                By minimality of $r'$, all processes arrive at round $r'$,
                and by symmetry of $\textit{CDel}_{\textit{block}}$ they all
                receive the same messages as $k$. Thus, all processes
                are blocked at round $r'$, there are no more next or deliveries,
                and $t_{block}$ is therefore invalid.
            \item For the first $r-1$ iterations, all processes change round.
                Thus, every one arrives at round $r$ at the $r$-ith iteration.
                By definition of
                an earliest run, all messages from the round are delivered
                before any \textit{next}. The symmetry of
                $\textit{CDel}_{\textit{block}}$ also ensures that every process
                received the same messages, that is all messages from
                round $< r$ and all messages from \textit{cho}$(r,j)$.
                These are exactly the messages received by $j$ in $t$ at
                round $r$.
                But by hypothesis, $j$ is blocked in this state in $t$.
                We thus deduce that all processes are blocked at round $r$
                in $t_{block}$, and thus that it is an invalid run.
        \end{itemize}

        Either way, we deduce that $f$ is invalid, which is a
        \textbf{contradiction}.
    \end{proof}

    \subsection{When The Future Serves}

    In the above, we considered cases where the dominating strategy is at
    most reactionary: only the past and present rounds are useful
    for generating Heard-Of collections. But messages
    from future rounds serve in some cases.
    We give an example, presenting only the intuition.

    \begin{definition}[\textit{PDel}$_{lost}^1$]
        The Delivered predicate \textit{PDel}$_{lost}^1$ corresponding to
        at most 1 message lost is:
        $\{ \textit{CDel} \in (\mathbb{N}^* \times \Pi) \mapsto \mathcal{P}(\Pi)
        \mid \sum\limits_{r>0, j \in \Pi} |\Pi \setminus \textit{CDel}(r,j)|
        \leq 1\}$.
    \end{definition}

    Application of our results on carefree strategy shows that the
    carefree strategies dominating all carefrees for this predicate is
    $f_{n-1}$. Similarly, looking at the past only allows processes to
    wait for $n-1$ messages because one can always deliver all messages from
    the past, and then the loss might be a message from the current round.
    If we look at the messages from the next round, on the other hand, we
    can ensure that at each round, at most one message among all processes
    is not delivered on time.

    \begin{definition}[Asymmetric Strategy]
        Let $\textit{after}: Q \mapsto \mathcal{P}(\Pi)$ such that $\forall q \in Q:
        \textit{after}(q) = \{ k \in \Pi \mid
        \langle q.round + 1, k \rangle \in q.received\}$,
        and \textit{cfree} as in Definition~\ref{carefree}.\\
        Define $f_{asym} \triangleq
        \left\{
        q \in Q ~\middle|~
        \begin{array}{ll}
            & \textit{cfree}(q) = \Pi\\
            \lor & (|\textit{after}(q)| = n-1 \land |\textit{cfree}(q)| = n-1\}
        \end{array}
        \right\}$.
    \end{definition}

    Intuitively, this strategy is valid because at each round and for each
    process, only two cases exist:
 either no message for this process at this round is lost, and
            it receives a message from each process;
         or one message for this process is lost at this round, and
            it only receives $n-1$ messages. But all other processes
            receive $n$ messages, thus change round and send
            their message from the next round. Since the one loss already
            happened, all these messages are delivered, and the original
            process eventually receives $n-1$ messages from the next round.

            This strategy also ensures that at most one process per round receives
    only $n-1$ messages on time -- the others must receive all messages. This
    vindicates the value of messages from future rounds for some Delivered predicates,
    such as the ones with asymmetry in them.

\section{Related Works}
\label{sec:relatedworks}

    \subparagraph{Rounds Everywhere}

    Rounds in message-passing algorithms date at least back
    to their use by Arjomandi et al.~\cite{ArjomandiFirstRound} as
    a synchronous abstraction of time complexity. Since then, they are
    omnipresent in the literature. First, the number of rounds taken by a
    distributed computation is a measure of its complexity.
    Such round complexity was even developed into a full-fledged
    analogous of classical complexity theory by Fraigniaud et
    al.~\cite{FraigniaudComplexity}. Rounds also serve as stable
    intervals in the dynamic network model championed by Kuhn and
    Osham~\cite{KuhnDynamic}: each round corresponds
    to a fixed communication graph, the dynamicity following
    from possible changes in the graph from round to round. Finally,
    many fault-tolerant algorithms are structured in rounds,
    both synchronous~\cite{FisherAlgo} and asynchronous ones~\cite{ChandraCHT}.

    Although we only study message-passing models in this article, one
    cannot make justice to the place of rounds in distributed computing without
    mentioning its even more domineering place in shared-memory models.
    A classic example is the structure of executions underlying the
    algebraic topology approach pioneered by Herlihy and
    Shavit~\cite{HerlihyTopology}, Saks and Zaharoglou~\cite{SaksTopology},
    and Borowsky and Gafni~\cite{BorowskyTopology}.

    \subparagraph{Abstracting the Round}

    Gafni~\cite{GafniRRFD} was the first to attempt the unification
    of all versions of rounds. He introduced
    the Round-by-Round Fault Detector abstraction,
    a distributed module analogous to a failure detector which outputs
    a set of suspected processes. In a system using RRFD, the end condition
    of rounds is the reception of a message from every process not suspected
    by the local RRFD module;
    communication properties are then defined as predicates on the
    output of RRFDs. Unfortunately, this approach does not suit our
    needs: RRFDs do not ensure termination of rounds, while we require it.

    Next, Charron-Bost and Schiper~\cite{CharronBostHO} took a dual
    approach to Gafni's with the Heard-Of Model. Instead of specifying
    communication by predicates on a set of suspected processes, they
    used Heard-Of predicates: predicates on a collection of
    Heard-Of sets, one for each round $r$ and each process $j$, containing
    every process from which $j$ received the message sent in
    round $r$ before the end of this same round.
    This conceptual shift brings two advantages: a purely
    abstract characterization of message-passing models
    and the assumption of infinitely many rounds, thus of round termination.

    But determining which model implements a given Heard-Of
    predicate is an open question. As mentioned in
    Mari{\'{c}}~\cite{MaricCutoff},
    the only known works addressing it, one by Hutle and
    Schiper~\cite{HutleComPredicates} and the other by Dr\u{a}goi et
    al.~\cite{DragoiPSYNC}, both limit themselves
    to very specific predicates and partially synchronous system models.

\section{Conclusion and Perspectives}
\label{sec:conclusion}

    We propose a formalization for characterizing Heard-Of predicate of
    an asynchronous message-passing model through Delivered
    predicates and strategies. We also show its relevance,
    expressivity and power: it
    allows us to prove the characterizations from Charron-Bost and
    Schiper~\cite{CharronBostHO}, to show the existence of characterizing
    predicates for large classes of strategies as well as the form
    of these predicates. Yet there are two aspects of this research
    left to discuss: applications and perspectives.

    First, what can we do with this characterizing Heard-Of predicate?
    As mentioned above, it gives the algorithm designer a concise
    logical formulation of the properties on rounds a given model
    can generate. Therefore, it allows the design of algorithms at a higher level
    of abstraction, implementable on any model which can generate the
    corresponding Heard-Of predicate.
    The characterizing predicate is also crucial to verification: it bridges
    the gap between the intuitive operational model and its formal
    counterpart. To verify a round-based algorithm for a given message-passing
    model, one needs only to check if its correctness for the characterizing
    predicate. If it is the case, we have a correct implementation by combining
    the algorithm with a dominating strategy; if it is not, then the algorithm
    will be incorrect for all predicates generated by the model.

    Finally, it would be beneficial to
    prove more results about the existence of a dominating strategy,
    as well as more conditions for the dominating strategy to be
    in a given class. There is also space for exploring
    different Delivered predicates. For example, some of our results suppose
    that the predicate contains the total Delivered collection; what can we
    do without this assumption? Removing it means that faults are certain
    to occur, which is rarely assumed. Nonetheless,
    it might be interesting to study this case, both as a way to strengthen
    our results, and because forcing failures might be relevant when modelling
    highly unreliable environments such as the cloud or natural settings.
    Another viable direction would be to add oracles to the processes,
    giving them additional information about the Delivered collection, and
    see which Heard-Of predicates can be generated. These oracles might for
    example capture the intuition behind failure detectors.



\bibliography{references}


\end{document}